\documentclass[sigconf]{aamas}

\usepackage{algpseudocode}

\usepackage{balance} % for balancing columns on the final page

\usepackage{algorithm}
\usepackage{thmtools}
\usepackage{thm-restate}
\newtheorem{theorem}{Theorem}
\newtheorem{lemma}{Lemma}
\newtheorem{remark}{Remark}
\newtheorem{definition}{Definition}
\newtheorem{corollary}{Corollary}
\usepackage{cleveref}
\renewcommand{\P}{\mathcal{P}}
\newcommand{\s}{\textbf{s}}

\setcopyright{none}
\renewcommand{\footnotetextcopyrightpermission}[1]{}

\acmBooktitle{}
\acmConference{}{}{}{}

\title[Greedy Routing Reachability Games]{Greedy Routing Reachability Games}

\author{Pascal Lenzner}
\affiliation{
  \institution{University of Augsburg}
  \city{Augsburg}
  \country{Germany}}
\email{pascal.lenzner@uni-a.de}

\author{Paraskevi Machaira}
\affiliation{
  \institution{Hasso Plattner Institute}
  \city{Potsdam}
  \country{Germany}}
\email{paraskevi.machaira@hpi.de}

\begin{abstract}
Today's networks consist of many autonomous entities that follow their own objectives, i.e., smart devices or parts of large AI systems, that are interconnected. Given the size and complexity of most communication networks, each entity typically only has a local view and thus must rely on a local routing protocol for sending and forwarding packets. A common solution for this is greedy routing, where packets are locally forwarded to a neighbor in the network that is closer to the packet's destination. 

In this paper we investigate a game-theoretic model with autonomous agents that aim at forming a network where greedy routing is enabled. The agents are positioned in a metric space and each agent tries to establish as few links as possible, while maintaining that it can reach every other agent via greedy routing. Thus, this model captures how greedy routing networks are formed without any assumption on the distribution of the agents or the specific employed greedy routing protocol. Hence, it distills the essence that makes greedy routing work. 

We study two variants of the model: with directed edges or with undirected edges. For the former, we show that equilibria exist, have optimal total cost, and that in Euclidean metrics they can be found efficiently. However, even for this simple setting computing optimal strategies is NP-hard. For the much more challenging setting with undirected edges, we show for the realistic setting with agents in 2D Euclidean space that the price of anarchy is between 1.75 and 1.8 and for higher dimensions it is less than 2. Also, we show that best response dynamics may cycle, but that in Euclidean space almost optimal approximate equilibria can be computed in polynomial time. Moreover, for 2D Euclidean space, these approximate equilibria outperform the well-known Delaunay triangulation. 
\end{abstract}

\keywords{greedy routing, network creation games, Nash equilibrium, price of anarchy, game theory, geometry}

\newcommand{\BibTeX}{\rm B\kern-.05em{\sc i\kern-.025em b}\kern-.08em\TeX}

\begin{document}

\pagestyle{fancy}
\fancyhead{}

\maketitle

\section{Introduction}
Greedy routing is widely used in many applications, because it enables communication between the nodes of a network without requiring complete knowledge of the network's structure. Given that our communication networks today are large, complex, and potentially also dynamically changing, using traditional shortest path based routing, facilitated via fixed routing tables that must be distributed and maintained at all important network nodes, will have to be replaced by local and more adaptive alternatives. One of the most prominent options for this is greedy routing, also known as geographic routing or stateless routing \cite{finn1987routing, karp2000gpsr}. While common for small ad-hoc peer-to-peer networks of smart devices or sensors, greedy routing was recently proposed for the whole Internet~\cite{boguna2010sustaining}. 

The key for greedy routing is that every network node has a position in some underlying space, typically a metric space like the Euclidean plane or the hyperbolic disc. These positions allow for deciding locally which network neighbor to use as the next hop in packet routing. If at every step of the greedy routing path always a next hop exists that is strictly closer to the target position than the current node then greedy routing succeeds. If greedy routing succeeds between all pairs of nodes, the network is called \emph{navigable}. Thus, the challenge is to create a navigable network. 

Significant research has been conducted on greedy routing but mostly with the goal of finding a suitable greedy embedding, i.e., a mapping of nodes to virtual coordinates in some space to facilitate routing~\cite{WestphalP09,CvetkovskiC09,Blasius0KK20}. For this, a fixed network is given and the coordinates from the embedding guide packets on their way to their destination. A different approach and equally important, is the setting where there is no predefined network and the nodes have to construct a network among themselves such that greedy routing is enabled. One prominent application are first responder networks in areas where a natural disaster like an earthquake or tsunami has destroyed the communication infrastructure. The idea is that communication sensors are dropped in the area that then establish a communication network that can be used by rescue teams and disaster relief workers~\cite{firstresponder,firstresponderKumarRS04}. For such applications it is typically assumed that the process of setting up the network is coordinated in the sense that all devices run the same local algorithm.  Another application is database similarity search where constructing navigable networks is mentioned as an open problem~\cite{diwan2024navigable, al2025distance}.

While embeddings work with a fixed network and try to find suitable node positions, shifting the focus away from greedy embeddings highlights two important aspects of the problem. First, and ever more important in the current age of smart devices, it allows the study of a decentralized setting without a central authority dictating the network structure. Instead, following a game-theoretic approach~\cite{Papadimitriou01} and in particular the well-known paradigm of network creation games~\cite{bala2000noncooperative,fabrikantNetwork2003}, the nodes themselves make strategic decisions on which edges to build in order to enable greedy routing, calling for a game-theoretic analysis. Second, removing the constraints imposed by a given network reveals the fundamental mechanisms and structural properties that greedy routing networks necessarily require. This understanding is essential for designing optimal networks that enable greedy routing and that have certain structural properties, such as minimizing the total cost.

In this paper we set out to investigate the structural properties of navigable networks that arise from the interaction of selfish agents. For this, we consider a very basic network formation game, where agents that correspond to nodes with a fixed position in a metric space strategically set up links to other nodes to enable greedy routing paths to all network nodes. This objective aims at greedy routing reachability, i.e., the most basic property needed for all agents. With this, we lay the foundation for more complex models that besides greedy navigability might add more objectives, like robustness or stretch guarantees. 
We find that almost stable states exist and can be efficiently computed. Moreover, these equilibria have favorable properties since they are guaranteed to be close to centrally optimized networks in terms of cost, i.e., in terms of the number of built edges. Smart agents can make independent decisions. We show that such agents do not need central coordination to create almost optimal navigable networks.

\subsection{Model and Preliminaries}

We consider $n$ agents corresponding to points $\P = \{p_1,\dots,p_n\}$ in a metric space $\mathcal{M} = (\mathcal{P},d_\mathcal{M})$, where $d_\mathcal{M}(u,v)$ denotes the distance between any two points $u, v\in \mathcal{P}$, with $d_\mathcal{M}(u,u)=0$ for all $u\in \mathcal{P}$, and where the triangle inequality holds. If clear from the context, we omit $\mathcal{M}$. The agents, knowing the positions of all points, aim to create a network that enables greedy routing. For this, each agent decides on a set of incident edges it wants to establish. 

The \emph{strategy of agent~$u \in \P$} is $S_u$, a set of edges with endpoints in $\P\setminus{u}$, i.e., agent~$u$ can buy incident edges to any subset of other agents. For each edge in $S_u$, we say that agent $u$ \emph{owns} the edge.
We consider two variants. In the directed case, each edge in $S_u$ is of the form $(u,v)$, representing that agent $u$ buys a directed edge to $v$. In the undirected case, each edge in $S_u$ is of the form $\{u,v\}$, denoting that agent $u$ buys an undirected edge to $v$. Let $\s = (S_1,\dots,S_n)$ be the \emph{strategy-profile}, i.e., the vector of strategies of all agents. Also, for agent $u\in \P$ let $\s = (S_u,\s_{-u})$, where $\s_{-u}$ is the vector of strategies of all agents except $u$. We sometimes use the notation $S_u(\s)$ to explicitly refer to the strategy profile which contains strategy $S_u$.
Any strategy-profile~$\s$ defines a network $G(\s) = \left(\mathcal{P}, E(\s)\right)$,  where $E(\s) = \bigcup_{u\in \P}S_u$ in case of directed edges or $ E(\s) = \bigcup_{u\in \P}\{\{u,v\} \mid \{u,v\}\in S_u \vee \{u,v\}\in S_v\}$ in case of undirected edges. Vertices of $G(\s)$ are called agents, nodes, or points.

Given a network $G = (\mathcal{P},E)$, a \emph{greedy routing path from \(u\) to \(v\) in $G$} is a sequence \((x_1, x_2,\dots , x_j)\), with $x_i\in \mathcal{P}$, for $1\leq i \leq j$, where \(x_1 = u\), \(x_j = v\), and $(x_i,x_{i+1}) \in E$ (or $\{x_i,x_{i+1}\} \in E$, for undirected edges), for $1\leq i \leq j-1$, such that \(d(x_i, v) > d(x_{i+1}, v)\) holds for all $1\leq i\leq j-1$. 
Thus, such a path is a directed (undirected) path from $u$ to $v$ in $G$, where along the path the nodes get strictly closer to the endpoint of the path in terms of their distance. We say that \emph{greedy routing is enabled} 
for an agent~$u$ in network~$G$, or that $u$ is \emph{greedy connected}, if in $G$ there exists a greedy routing path from $u$ to every other node. If this holds for every node in $G$, then we say that greedy routing is enabled in $G$. See \Cref{fig:greedy-example}.
\begin{figure}[t]
    \centering
    \includegraphics[width=0.4\linewidth]{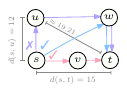}
    \caption{Illustration of greedy routing paths in $\mathbb{R}^2$. The pink and blue $s$-$t$-paths are greedy routing paths, while the purple $s$-$t$-path is not, since $ 15 = d(s,t) < d(u,t) = \sqrt{12^2+15^2}$. Greedy routing is enabled only for node $s$.}
    \label{fig:greedy-example}
\end{figure}

Agents select strategies to minimize their \emph{cost} within the formed network. The cost of agent~\(u\) in network $G(\mathbf{s})$ is defined as $$c_u(\mathbf{s}) = |S_u| + \omega,$$
where $\omega$ serves as a penalty term, i.e., it is equal to $0$ if greedy routing is enabled for agent~$u$ and $\infty$ otherwise.
The \emph{social cost} of network $G(\mathbf{s})$ is $c(\mathbf{s}) = \sum_{u\in \mathcal{P}}c_u(\mathbf{s})$, i.e., the total cost over all agents. For a set of points~$\mathcal{P}$, the network $G(\mathbf{s}^*) = (\mathcal{P},E(\mathbf{s}^*))$ minimizing the social cost is called the \emph{social optimum network (SO)} for~$\mathcal{P}$.

A strategy $S^*_u$ is called a \emph{best response} of agent~\(u\) in strategy-profile $\mathbf{s} = (S_u,\mathbf{s}_{-u})$, if $c_u(S^*_u,\mathbf{s}_{-u}) \leq c_u(S'_u,\mathbf{s}_{-u})$ for any other strategy $S'_u$ where 
in the directed case $S'_u \subseteq \{(u,v)\mid v\in \mathcal P,\ v\neq u\}
$, while in the undirected case $S'_u \subseteq \{ \{u,v\}\mid v\in \mathcal P,\ v\neq u\}$. That is, a best response strategy minimizes agent $u$'s cost given that the other agents' strategies are fixed. 
A strategy-profile~\(\mathbf{s}\) is a \emph{pure Nash Equilibrium} (NE), or \emph{stable} for short, if in profile~$\mathbf{s}$ every agent already plays a best response. We denote by NE the set of all strategy profiles that are pure Nash equilibria. For the setting with directed edges, we have a bijection between strategy-profiles~$\mathbf{s}$ and the corresponding networks $G(\mathbf{s})$. In the case of undirected edges, the same holds if also the edge ownership information is considered. In both cases, we will say that a network $G(\mathbf{s})$ is in NE, if the strategy profile~$\mathbf{s}\in$ NE\footnote{For undirected edges, in NE no edge can be bought by both endpoints.}. Note that in every NE network greedy routing must be enabled, since every agent could buy direct edges to every other node to avoid the infinite penalty $\omega$.

We also consider weaker versions of stability. A network \(G(\mathbf{s})\) is in \emph{\(\beta\text{-approximate}\) NE} ($\beta$-NE or \emph{$\beta$-stable}) if no agent \(u\) can change its strategy such that its cost decreases below \(\frac{1}{\beta} c_u(\mathbf{s})\), i.e., no agent can reduce its cost to a $\beta$-fraction by any strategy change. Similarly, $G(\s)$ is in \emph{$\gamma\text{-additive}$ NE} ($+\gamma$-NE or \emph{$+\gamma$-stable}) if no agent $u$ can decrease its cost below $c_u(\s)-\gamma$. If greedy routing is enabled, then in a $\beta$-stable network, agents buy at most a factor of $\beta$ too many edges, while in a $+\gamma$-stable network they buy at most $\gamma$ extra edges. Thus, $1$-stable and $+0$-stable networks are in NE. 

The Price of Anarchy (PoA)~\cite{KP09} is defined as the worst-case ratio of the social cost of any NE and the corresponding social optimum. 

A \emph{best response path} is a sequence $\mathbf{s}_0,\mathbf{s}_1,\dots,\mathbf{s}_k$ of strategy-profiles, such that $\mathbf{s}_i$ results from some agent switching to a best response in $\mathbf{s}_{i-1}$, for $1\leq i\leq k$. A \emph{best response cycle} (BRC) is a cyclic best response path, i.e., $\mathbf{s}_0 = \mathbf{s}_k$, and its existence implies that a game cannot have an ordinal potential function~\cite{monderer1996potential}. Hence, it is not a potential games for which the existence of a  NE is guaranteed.

For each point $u\in \P$ let the set of \emph{nearest neighbors} of $u$ be $N(u) = \arg\min_{v\in\P\setminus\{u\}}{d_{\mathcal{M}}(u,v)}$. Thus, the set~$N(u)$ consists of points in $\P$ with minimum distance to $u$. The \emph{Nearest Neighbor Graph (NNG)} on $\P$, denoted by $G^{NNG}(\P, E)$ is defined as follows. In the directed case, the edge set is $E = \{(u, v) \mid v \in N(u)\}$, i.e., each point has a directed edge to its nearest neighbors, while in the undirected case,  $E = \{\{u, v\} \mid v \in N(u) \vee u\in N(v)\}$. See \Cref{fig:nng}~(left) for an example. It is known that the NNG has to be a subgraph of every graph that supports greedy routing \cite{PapadimitriouR05}. 

A \emph{Delaunay triangulation ($DT$)} \cite{DBLP:books/lib/BergCKO08} on a set of points~$\P$ in the $D$-dimensional Euclidean space is a triangulation such that no point in $\P$ is inside the circum-hypersphere of any $D$-simplex in $DT(\P)$ (see \Cref{fig:nng}~right) for an example. An important structural property is that the NNG on $\P$ is a subgraph of $DT(\P)$. Moreover, it has been shown in \cite{Bose_Morin_2004} that Delaunay triangulations in two-dimensional Euclidean space support greedy routing. For this case, the Delaunay triangulation is a planar graph, i.e., it has at most $3|\mathcal{P}|-6$ edges.

\begin{figure}
    \centering
\includegraphics[width=0.7\linewidth]{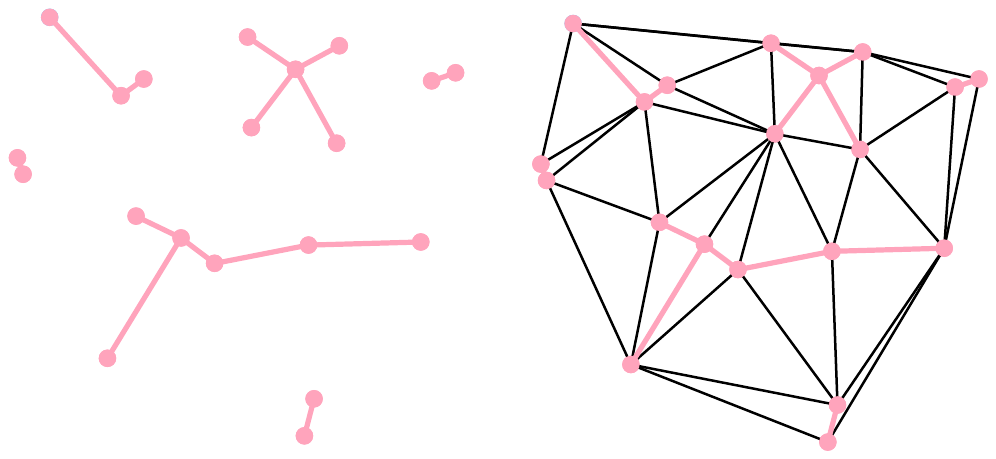}
    \caption{(right) The nearest neighbor graph of 20 points $\mathcal{P}$ in $\mathbb{R}^2$, (left) The Delaunay triangulation of $\mathcal{P}$.}
    \label{fig:nng}
\end{figure}

The \emph{kissing number} in $\mathbb{R}^D$, denoted by $K(D)$ is the maximum number of disjoint unit hyperspheres that can simultaneously touch a given unit hypersphere in $\mathbb{R}^D$. Equivalently~\cite{conway2013sphere}, $K(D)$ is the maximum number of points that can be placed on the surface of a unit sphere in $\mathbb{R}^D$ so that the angular distance between any two points is at least $60^\circ$. We have $K(D) \leq 2^{0.401D}$ but exact values of $K(D)$ are known only for $D\in\{1, 2, 3, 4, 8, 24\}$. See \Cref{fig:kissing_number}. 
\begin{figure}[h]
    \centering
    \includegraphics[width=0.18\linewidth]{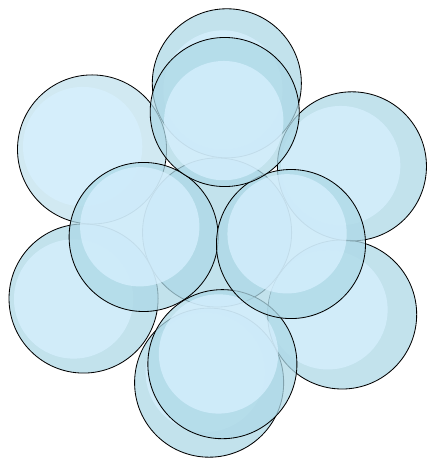}
    \caption{The kissing number in $\mathbb{R}^3$ is 12.}
    \label{fig:kissing_number}
\end{figure}

For an agent $u\in\P$ embedded in a metric space, we define the \emph{greedy routing set} as a set of edges, such that, after connecting with these edges, agent $u$ has at least one neighbor $v$ satisfying
$d(v,w)<d(u,w)$, for all $w\in \P\setminus\{u\}$. A \emph{minimum greedy routing set}, denoted $\Phi(u)$, is the smallest such set for an agent $u$, and its size $\phi(u)=|\Phi(u)|$ is called the \emph{greedy routing degree} of agent $u$. In other words, $\phi(u)$ measures the minimum connectivity effort required for agent~$u$ to locally enable greedy routing. By $\phi^+(u)$, we denote the number of these edges that belong to the NNG. See \Cref{fig:phi}.

\begin{figure}[h]
    \centering
    \includegraphics[width=\linewidth]{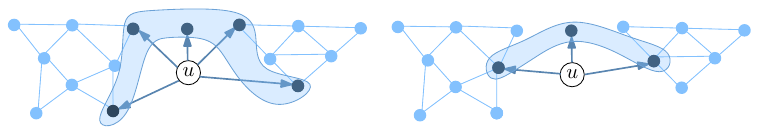}
    \caption{(left) A greedy routing set of agent $u$, (right) the minimum greedy routing set with $\phi(u)=3$ and $\phi^+(u)=1$.}
    \label{fig:phi}
\end{figure}

\subsection{Related Work}
In greedy routing \cite{finn1987routing}, each node forwards a packet to the neighbor that is geographically closest to the destination, using only local information. However, a packet may get stuck at a \emph{local minimum} in the network. Combining greedy forwarding with \emph{perimeter routing} to recover from such local minima is introduced independently by Karp and Kung \cite{karp2000gpsr} and Bose et al. \cite{bose1999routing}. Subsequent studies extended this approach to handle routing voids and to improve delivery guarantees \cite{fang2006locating, liu2008greedy, datta2002internal, kuhn2003worst}. Another method for overcoming voids is assigning virtual coordinates to the nodes, constructing a greedy embedding \cite{kleinberg2007geographic, eppstein2008succinct, Blasius0KK20}. While such embeddings can be computed efficiently, they introduce limitations and do not express real distances. Despite the extensive research on greedy routing, not much is known for the problem of constructing navigable networks.

The Delaunay triangulation (DT)~\cite{DBLP:books/lib/BergCKO08} plays a central role in navigable networks as it enables greedy routing in $\mathbb{R}^2$ \cite{Bose_Morin_2004} and $\mathbb{R}^3$ \cite{wang2013three}. Also, if the destination can be any point in the continuous space rather than a discrete set of nodes, the DT without degenerate edges is a subgraph of every graph that supports greedy routing~\cite{Ghaffari_Hariri_Shirmohammadi_2010}.

Navigable networks in higher dimensional spaces have attracted interest because of their application in distributed similarity search and high-dimensional data indexing \cite{diwan2024navigable, fu2017fast}. Structures like hierarchical navigable small-world networks~\cite{malkov2018efficient} use greedy routing principles to perform efficient approximate nearest-neighbor search. Most of the work on this is focused on empirical studies.

Since navigable networks share many structural properties with real-world complex networks \cite{boguna2009navigability}, game theory provides a useful framework for understanding how such networks form and evolve over time. Bala and Goyal~\cite{bala2000noncooperative} introduced a non-cooperative network formation game where agents correspond to network nodes and buy incident edges to reach every other agent. In contrast to our model, any path connecting two nodes establishes reachability. They show that NE exist and can be easily found via suitable game dynamics. More involved are the network creation games (NCGs) by Fabrikant et al.~\cite{fabrikantNetwork2003}, where the agents buy incident undirected edges to minimize the sum of hop distances to all other agents. They show that NE exist and have a low PoA, but computing a best response is NP-hard. NCG variants that involve geometry are proposed by Eidenbenz et al.~\cite{eidenbenzEquilibria2006}, where agents corresponding to points in the plane establish undirected edges to ensure reachability, by Abam et al.~\cite{abam2019geometric}, where the agents aims for a low maximum stretch (defined as the ratio of graph distance to metric space distance), and by Bilò et al.~\cite{biloGeometric2019} and Friedemann et al.~\cite{friedemannEfficiency2021}, where agents aim to minimize their sum of shortest path distances to the other agents.

The works most closely to ours, where greedy routing is studied within the network creation framework, are those of Gulyás et al.~\cite{gulyasNavigable2015} and Berger et al.~\cite{berger2025strategic}. In \cite{gulyasNavigable2015} agents aim to  reach other agents via greedy routing in hyperbolic space. They study the directed case where agents do not influence each other and prove that computing a minimum greedy routing set is NP-hard. In \cite{berger2025strategic},  each connection has a cost $\alpha>0$, and agents aim to create a directed network in which greedy routing works,  while minimizing connection costs and the sum of stretches to the other agents. Most results are limited to 1-2 metrics (where $d_{\mathcal{M}}(u,v)\in \{1,2\}$ for any $u,v\in\P$) and tree metrics, although they prove the existence of a 5-approximate NE in $\mathbb{R}^2$, and an $(\alpha+1)$-NE in general metric spaces. 

\subsection{Our Contribution}
Our work establishes the theoretical foundations that bridge game theory and computational geometry,  introducing tools for developing more complex models for network creation. While most of the literature focuses on greedy embeddings or on greedy routing in families of pre-constructed networks with directed edges, we study the process of creating networks, directed or undirected, that enable greedy routing. In contrast to existing work, our model captures how greedy routing networks are formed without assumptions on the distribution of the agents or the specific greedy routing protocol. Hence, it distills the essence that makes greedy routing work. 

In particular, for agents in 2D Euclidean metrics we give an algorithm that efficiently constructs a navigable network that is almost stable from a game-theoretic point of view, while having at most $80\%$ more edges than the sparsest possible navigable network. This improves on the approximation factor of $3$ achieved by the well-known Delaunay triangulation. Moreover, this can be seen as the basic step needed for designing efficient algorithms for creating networks that incorporate additional aspects like stretch or robustness guarantees. The created networks are useful for applications such as sensors that are randomly dropped over an area, IoT networks, and other situations where a communication network has to be formed among smart devices. Also, our approach adapts well to dynamic network topologies, is independent of the employed routing protocol, and it is more robust, since it only relies on local communication between immediate neighbors.

While focusing on 2D Euclidean space, we show that our approach also works in general metrics and higher dimensions, yielding navigable networks with at most twice as many edges as the optimum. This increase is due to agent selfishness, and we study this by giving almost tight bounds on the price of anarchy.

\section{Properties for Greedy Routing}
We establish some structural properties that will be used in our analysis of both directed and undirected variants. These properties are essential for enabling greedy routing. 
\paragraph*{\textbf{General Metrics:}} We start with properties that hold for any metric space. Thus, these properties are the foundation for greedy routing in any network. 
\begin{restatable}{lemma}{lemmaalphaone}
\label{independent}
    Consider two agents $u,v\in \mathcal{P}$ connected by an edge $(u,v)$ (or an undirected edge $\{u,v\}$). If greedy routing is enabled for agent~$v$, then the agents that can be reached by a greedy routing path from $u$ via $v$ are independent of agent~$v$'s strategy.   
\end{restatable}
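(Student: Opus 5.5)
Our approach is to show that the set of agents reachable from $u$ by a greedy routing path whose second node is $v$ equals a quantity that depends only on metric distances. Concretely, define
\[
R(u,v) = \{ w \in \P\setminus\{u\} \mid d(v,w) < d(u,w) \}.
\]
We prove that, whenever greedy routing is enabled for $v$, the set of agents reachable from $u$ by a greedy routing path starting with the edge $(u,v)$ (or $\{u,v\}$) is exactly $R(u,v)$. Since $R(u,v)$ is defined purely from $d_\mathcal{M}$, it does not depend on $S_v$ or on any other strategy, provided $v$ stays greedy connected.

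\textbf{Inclusion $\subseteq$.} Let $(u=x_1,v=x_2,\dots,x_j=w)$ be a greedy routing path to $w$. By the definition of a greedy routing path, $d(x_1,w) > d(x_2,w)$, that is, $d(u,w) > d(v,w)$. Hence $w \in R(u,v)$. This direction uses no hypothesis on $v$.

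\textbf{Inclusion $\supseteq$.} Take $w \in R(u,v)$. There are two cases.
\begin{itemize}
\item If $w = v$, the single edge $(u,v)$ is a greedy routing path, since $d(u,v) > 0 = d(v,v)$.
\item If $w \neq v$, then because greedy routing is enabled for $v$, there is a greedy routing path $(v=y_1,\dots,y_k=w)$ in the current network. Prepending $u$ gives the walk $(u,y_1,\dots,y_k)$. It is a valid walk because the edge $(u,v)$ is present. The strict decrease condition holds at the first step because $d(u,w) > d(v,w)$, and it holds at every later step because the original path is greedy.
\end{itemize}
We also check that the prepended sequence is a genuine path. Every $y_i$ satisfies $d(y_i,w) \le d(v,w) < d(u,w)$, so $u$ does not reappear on it. Therefore $w$ is reachable from $u$ via $v$.

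Combining the two inclusions gives equality with $R(u,v)$, which proves independence from $v$'s strategy. The only subtle point, and the step we expect to need the most care, is making precise what "independent" means. The claim is: for any two strategies $S_v, S_v'$ under which $v$ is greedy connected, and with the edge between $u$ and $v$ still present, the reachable set is the same. In the undirected case this requires noting that the edge $\{u,v\}$ may be owned by $u$, so its presence must be assumed as in the statement. The argument itself is the straightforward concatenation above.
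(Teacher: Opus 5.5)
Your proof is correct and is essentially the paper's argument. The paper also takes a greedy path from $v$ to $w$, which exists because $v$ is greedy connected, and prepends the edge from $u$, using the inequality $d(v,w)<d(u,w)$ supplied by the original path through $v$. The only difference is presentation: you prove directly that the reachable set equals the metric-only set $R(u,v)$, whereas the paper argues by contradiction between two profiles $\s,\s'$ in which $v$ is greedy connected, which amounts to the same thing.
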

\begin{proof}
Consider any destination point $w\in \mathcal{P}$ and two strategy profiles $\s, \s'$ in which greedy routing is enabled for agent~$v$ but where $S_v \neq S_v'$. Suppose towards a contradiction that there is a greedy routing path from agent~$u$ to $w$ via node~$v$ in profile~$\s$ while node~$w$ is not reachable by greedy routing via node~$v$ in profile~$\s'$.

Since greedy routing is enabled for agent~$v$ in $\s'$, there is a greedy routing path $P=(v_i, v_{i+1},\dots, v_k)$ of length $k$ from $v$ to $w$, where $v=v_0, v_k=w$ and $0 \leq i < k$. According to the definition of greedy routing, we have that $d(v_i, v_k) > d(v_{i+1}, v_k)$, for every $0 \leq i<k$. Since there is a greedy routing path from $u$ to $w$ via $v$ in profile~$\s$, we have that $d(v,w) < d(u,w)$. Thus, the path $(u,v)\cup P$ is a greedy routing path in $\s'$, which is a contradiction.
\end{proof}
Next, we have another rather obvious property.
\begin{restatable}{lemma}{lemmaalphatwo}
\label{lem:phi_edges}
    In a NE, an agent~$u$ builds at most $\phi(u)$ many edges.
\end{restatable}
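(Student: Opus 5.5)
Since a NE is a profile in which every agent plays a best response, it suffices to show that, against any fixed $\s_{-u}$ in which all other agents are greedy connected, agent~$u$ has a strategy with at most $\phi(u)$ edges that makes $u$ greedy connected. Then any strategy with more than $\phi(u)$ edges is strictly worse, because $u$'s cost is just $|S_u|$ once the penalty $\omega$ is $0$.

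\textbf{Setup.} Fix a NE $\s$. As noted in the model section, greedy routing is enabled for every agent in a NE, so every $v\neq u$ is greedy connected in $G(\s)$.

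\textbf{Candidate deviation.} Consider the deviation $S'_u=\Phi(u)$, in which $u$ buys exactly the edges of a minimum greedy routing set. By definition of $\Phi(u)$, for every destination $w\neq u$ there is a neighbor $v$ of $u$ via an edge of $\Phi(u)$ with $d(v,w)<d(u,w)$. If $v=w$ we are done. Otherwise we invoke Lemma~\ref{independent}: because $v$ is greedy connected, the greedy path $P$ from $v$ to $w$ exists. It could pass through $u$, so I need to check that this cannot happen. Every node on $P$ after $v$ is strictly closer to $w$ than $v$ is, hence strictly closer to $w$ than $u$ is. So $u$ does not lie on $P$, and $P$ avoids $u$'s edges entirely. (In the directed case $P$ uses only $v$'s and others' edges anyway; in the undirected case I must argue that $P$ does not use edges incident to $u$, which is exactly the distance argument just given.) Therefore $P$ survives $u$'s change of strategy, and $(u,v)\cup P$ is a greedy routing path. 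Hence $u$ is greedy connected under $(S'_u,\s_{-u})$, at cost at most $\phi(u)$.

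\textbf{Conclusion.} The best-response condition gives $|S_u|=c_u(\s)\le c_u(S'_u,\s_{-u})\le\phi(u)$. I expect the main obstacle to be the subtlety that other agents' greedy connectivity might itself depend on edges of $u$ in the undirected case. This is handled by the strict-decrease argument showing that greedy paths from $v$ to $w$ never pass through $u$ when $d(v,w)<d(u,w)$.
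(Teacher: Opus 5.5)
Your proof is correct and uses the same key idea as the paper: deviate to a minimum greedy routing set, and observe that for each destination, the chosen neighbor's greedy path in $G(\s)$ consists only of nodes strictly closer to the destination than $u$, so it avoids $u$ and survives the deviation. The paper phrases this as a contradiction (an unreachable destination would mean some node of $W$ is not greedy connected in $G(\s)$), while you argue directly, and your explicit strict-decrease argument is the cleaner version of that step; the appeal to Lemma~\ref{independent} is unnecessary, since the existence of $P$ follows simply from $v$ being greedy connected in the NE.
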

\begin{proof}
By definition, the greedy routing degree~$\phi(u)$ is the smallest number of neighbors that agent~$u$ needs to be connected with in order to be greedy connected. In particular, there exists a set $W$ of $\phi(u)$ many nodes such that for every $t\in \P$ there is a $w\in W$ with $d(w,t)<d(u,t)$. 

Suppose towards a contradiction that in a NE network $G(\s)$ agent~$u$ with strategy $S_u$ builds $k>\phi(u)$ edges. Consider an alternative strategy $S_u'$ where agent~$u$ builds edges only to the nodes in $W$ and consider the corresponding strategy profile $\s'$, that differs from profile~$\s$ only in agent~$u$'s strategy. If agent~$u$ is greedy connected in $G(\s')$ then agent~$u$ could reduce its cost by choosing strategy $S_u'$ in network $G(\s)$, which is a contradiction to $G(\s)$ being a NE network. Therefore, in network $G(\s')$, where agent~$u$ buys edges only to nodes in $W$, there must be a node~$v$ that is not reachable from agent~$u$ via greedy routing. Thus, by definition of the set $W$, there exists a node $w\in W$ with $d(w, v)< d(u,v)$ that does not have a greedy routing path to node~$v$ in network $G(\s')$. Moreover, the greedy routing path from $u$ to $v$ in network $G(\s)$ cannot be used by agent~$w$, since $d(w, v) < d(u,v)$. Therefore, agent~$w$ is not greedy connected, thus, network $G(\s)$ is not a NE instance.
\end{proof}

\begin{remark} \label{re:consant_phi}
In a NE network, for any agent $u\in \P$ the value of $\phi^+(u)$ is independent of the strategy profile, since all edges of the NNG have to be present in any NE network. 
\end{remark}

\paragraph*{\textbf{Euclidean Metric:}} Now we move to properties that hold for Euclidean metrics. The following two lemmas were proven by \citet{Clarkson87}. For completeness, we provide the proofs in our context.

\begin{restatable}{lemma}{lemmaalphathree}
\label{cosine-rule}
In Euclidean metrics, consider nodes $s, u, v \in \mathcal{P}$, where $d(s, u) \leq d(s, v)$, and let $\theta$ denote the angle $\angle usv$. If $\theta < \pi/3$ then the sequence $P=(s, u, v)$ is a greedy routing path from node~$s$ to node~$v$.
\end{restatable}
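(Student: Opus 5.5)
Since the path $P=(s,u,v)$ has only two steps, I will check the two greedy conditions of the definition directly. The endpoint of the path is $v$, so I need (i) $d(s,v) > d(u,v)$, which is the step $s\to u$, and (ii) $d(u,v) > d(v,v) = 0$, which is the step $u \to v$. Condition (ii) is immediate as long as $u \neq v$. If $u = v$, the sequence degenerates to the direct edge $(s,v)$; I will note this degenerate case separately. The implicit assumption is that the edges $(s,u)$ and $(u,v)$ are present; the lemma is a statement about the metric condition only. So all the work is in (i).

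For (i), I will apply the law of cosines in the triangle $s,u,v$. Write $a = d(s,u)$, $b = d(s,v)$, and note that $a \le b$ and $b>0$. Then
\[
d(u,v)^2 = a^2 + b^2 - 2ab\cos\theta .
\]
Because $\theta < \pi/3$, we have $\cos\theta > 1/2$. Since $a>0$ (as $u\neq s$), this gives the strict inequality
\[
d(u,v)^2 < a^2 + b^2 - ab = b^2 - a(b-a).
\]
The hypothesis $a \le b$ gives $a(b-a) \ge 0$, so $d(u,v)^2 < b^2$, that is, $d(u,v) < d(s,v)$, which is (i).

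The only delicate point is keeping the inequality strict. Strictness comes from $\cos\theta > 1/2$ together with $a,b>0$, not from $a\le b$, which may hold with equality. I also need that $\theta$ is well defined, which requires $u,v\neq s$; this holds because the points are distinct agents. If $u$ lies on the ray $sv$ (so $\theta=0$), the same computation still goes through. I do not expect any real obstacle beyond stating these degenerate cases cleanly.
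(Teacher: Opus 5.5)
Your proof is correct and follows the same route as the paper: apply the law of cosines in the triangle $s,u,v$, use $\cos\theta > 1/2$ and $d(s,u)\le d(s,v)$, and conclude $d(u,v) < d(s,v)$. You also place the strictness more carefully than the paper. The paper writes the step $d^2(s,u)+d^2(s,v)-d(s,u)\,d(s,v) < d^2(s,v)$ as strict, but it only holds with $\le$ when $d(s,u)=d(s,v)$; the strict inequality from $\cos\theta>1/2$ (with $d(s,u),d(s,v)>0$) is what actually gives the conclusion, exactly as you note.
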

\begin{proof}
In order to show that $P$ is a greedy routing path from node~$s$ to node~$v$, it suffices to show that $d(u,v)<d(s,v)$. Applying the Law of Cosines in triangle $vsu$, we have:
\begin{figure}
    \centering
    \includegraphics[width= 1\linewidth]{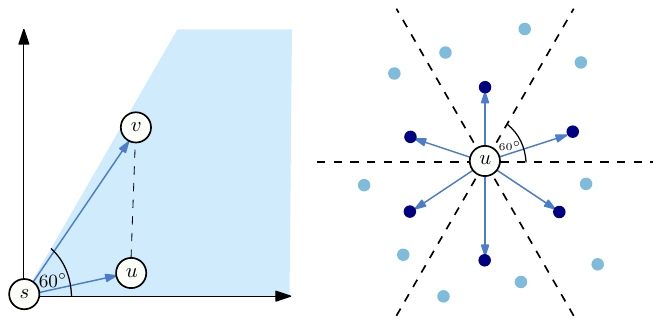}
    \caption{(left) The path from $s$ to $v$ via $u$ is a greedy routing path since $d(u,v)<d(s, v)$. (right) by building 6 edges, $u$ has a neighbor that is closer to every node than $u$ itself.}
    \label{fig:degrees}
\end{figure}

\begin{align*}
    d^2(u,v) &=  d^2(s,u)+d^2(s,v)-2\cdot d(s,u)\cdot d(s,v)\cdot cos(\theta)  \\
     &<  d^2(s,u)+d^2(s,v)-2\cdot d(s,u)\cdot d(s,v)\cdot 1/2 \\
      &= d^2(s,u)+d^2(s,v)-d(s,u)\cdot d(s,v)\\
     &< d^2(s,u)+d^2(s,v)-d^2(s,u)\\
      &= d^2(s,v)\\
    d(u,v)  &< d(s,v)
\end{align*}
Since $d(u,v) < d(s, v)$, the path from $s$ to $v$ via $u$ is a greedy routing path (see \Cref{fig:degrees}~(left) for an example).
\end{proof}

\begin{restatable}{lemma}{lemmaalphafour}
For 2D Euclidean metrics, the greedy routing degree of any agent can be computed in polynomial time and is at most~6.\label{sixedges}
\end{restatable}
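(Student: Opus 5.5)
The lemma has two parts: the upper bound of $6$ on $\phi(u)$, and polynomial-time computability. I would prove them separately. Both rest on \Cref{cosine-rule}: if $v$ is a neighbor of $u$ with $d(u,v)\le d(u,w)$ and $\angle vuw<\pi/3$, then $d(v,w)<d(u,w)$. So $v$ \emph{covers} $w$ in the sense of the greedy routing set definition. Note that for $w=v$ itself, $d(v,v)=0<d(u,v)$, so every neighbor covers itself.

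\textbf{Upper bound.} Fix $u$ and consider the points $\P\setminus\{u\}$ by their direction from $u$.
\begin{enumerate}
\item Partition the plane around $u$ into six half-open cones of angle exactly $\pi/3$, with apex $u$.
\item In each nonempty cone $C_i$, let $u$ buy an edge to a point $v_i\in C_i$ that is closest to $u$.
\item Take any $w\in C_i$ with $w\neq v_i$. Then $d(u,v_i)\le d(u,w)$, and $\angle v_iuw\le\pi/3$.
\end{enumerate}
If the angle is strict, \Cref{cosine-rule} gives $d(v_i,w)<d(u,w)$. The boundary case of angle exactly $\pi/3$ needs care, but half-open cones force $\angle v_iuw<\pi/3$ within a cone. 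To see this, take each cone as the set of directions in $[k\pi/3,(k+1)\pi/3)$; two directions in the same cone then differ by strictly less than $\pi/3$. So at most $6$ edges suffice, and $\phi(u)\le 6$.

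\textbf{Computability.} Since $\phi(u)\le 6$, a minimum greedy routing set can be found by brute force.
\begin{itemize}
\item Enumerate all subsets $W\subseteq\P\setminus\{u\}$ with $|W|\le 6$. There are $O(n^6)$ of them.
\item For each $W$, check in $O(n\cdot|W|)$ time whether every $w\in\P\setminus\{u\}$ has some $x\in W$ with $d(x,w)<d(u,w)$.
\item Return a smallest feasible $W$.
\end{itemize}
This takes $O(n^7)$ time per agent, which is polynomial. Feasibility of some $W$ is guaranteed by the cone construction, so the minimum exists and is at most $6$.

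\textbf{Main obstacle.} The only delicate point is the strict inequality in \Cref{cosine-rule}, which requires $\theta<\pi/3$. It is resolved by the half-open cone partition and measuring angles modulo $2\pi$ consistently. I would also check collinear points: they fall in the same cone at angle $0$, which is fine. Ties in distance to $u$ are handled because the lemma only needs $d(u,v_i)\le d(u,w)$.
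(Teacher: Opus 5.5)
Your proposal is correct and follows the paper's proof: partition the plane around $u$ into six cones of angle $\pi/3$, connect to a closest point in each cone and apply \Cref{cosine-rule}, then brute-force over all subsets of at most six points. Your half-open cone convention $[k\pi/3,(k+1)\pi/3)$ is a worthwhile refinement: it guarantees the strict hypothesis $\theta<\pi/3$ of \Cref{cosine-rule} even for points on cone boundaries, a case the paper's proof passes over silently.
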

\begin{proof}
We will prove this by construction (see \Cref{fig:degrees}): Partition the space around node~$u$ into six cones, each having an angle of $\pi/3$. In each cone select the node $v$ that is closest to $u$ and build the edge $(u,v)$. By \Cref{cosine-rule}, we have that agent~$u$ has a greedy routing path to every node in the cone via node~$v$. Therefore, the greedy routing degree~$\phi(u)$ is at most $6$.

To compute the greedy routing degree, we can simply try every subset of at most six edges, which takes polynomial time.
\end{proof}

\begin{remark}
    For Euclidean metrics in dimension $D>0$, the greedy routing degree of any agent can be computed in polynomial time and it is at most the kissing number $K(D)$. 
\end{remark}
The above follows by the definition of the kissing number by \citet{conway2013sphere} and by \Cref{cosine-rule}.

\section{Greedy Routing: Directed Edges}
In this section we explore the directed variant of our model beginning with some results that hold in every metric space.
\subsection{Nash Equilibria}
\paragraph*{\textbf{General Metrics.}} We start by proving that Nash equilibria always exist for arbitrary metric spaces.
\begin{restatable}{theorem}{theorembetaone}
\label{socialoptimum}
Every NE is a SO and every SO is a NE.
\end{restatable}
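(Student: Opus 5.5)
The plan is to show that, in the directed variant, every agent's cost in any NE equals $\phi(u)$, and that every SO also has cost exactly $\phi(u)$ for each agent. Both claims then follow from the identity $c(\s)=\sum_{u\in\P}\phi(u)$, which holds for every NE and every SO.

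\emph{Lower bound.} In any profile~$\s$ in which agent~$u$ is greedy connected, $S_u$ must itself be a greedy routing set. Indeed, for every target $w\neq u$, the first hop of a greedy routing path from $u$ to $w$ is an out-neighbor $v$ of $u$ with $d(v,w)<d(u,w)$. In the directed case the out-neighbors of $u$ are exactly the heads of edges in $S_u$, since other agents' edges point out of them and never out of $u$. Hence $|S_u|\ge\phi(u)$ whenever $c_u(\s)<\infty$. It follows that every profile with finite social cost satisfies $c(\s)\ge\sum_u\phi(u)$.

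\emph{Upper bound (construction).} Let each agent $u$ buy a minimum greedy routing set $\Phi(u)$. I claim that every agent is greedy connected in this profile. Fix a target $w$ and start at any $x_0\neq w$. Greedy forwarding always moves to some out-neighbor strictly closer to $w$, and such a neighbor exists by the definition of $\Phi(x_i)$ whenever $x_i\neq w$. Since $\P$ is finite and the distance to $w$ strictly decreases, the process cannot repeat a node. It must therefore terminate, and it can only terminate at $w$. Hence this profile has social cost $\sum_u\phi(u)$, so it is an SO, and every SO has cost exactly $\sum_u\phi(u)$. By the lower bound, in an SO each agent then has $|S_u|=\phi(u)$ exactly, since no agent can be below $\phi(u)$ and the sum is tight.

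\emph{NE $\Rightarrow$ SO.} In a NE every agent is greedy connected (otherwise it would deviate to buying all edges), so $|S_u|\ge\phi(u)$. \Cref{lem:phi_edges} gives $|S_u|\le\phi(u)$. Summing, the cost equals $\sum_u\phi(u)$, which is optimal.

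\emph{SO $\Rightarrow$ NE.} Take an SO~$\s$. As shown above, each agent is greedy connected with $|S_u|=\phi(u)$. Any deviation $S_u'$ either leaves $u$ not greedy connected, giving infinite cost, or leaves it greedy connected; in the latter case $|S_u'|\ge\phi(u)$ by the lower bound, since the lower bound depends only on $u$'s own edges. So no deviation is profitable.

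The main subtle step is the lower bound. It relies on the directed setting, where only $u$'s own edges leave $u$, and this is exactly what fails in the undirected variant. The existence argument via strictly decreasing distances is routine.
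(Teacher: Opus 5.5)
Your proof is correct, but it takes a different route from the paper.

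\textbf{The paper's route.} The paper gives a short exchange argument that rests only on \Cref{independent}. For NE$\Rightarrow$SO, some agent pays less in the SO, so it should deviate to its SO strategy. For SO$\Rightarrow$NE, a profitable unilateral deviation would, by independence, leave the other agents greedy connected and so lower the social cost.

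\textbf{Your route.} You pin down both notions numerically instead. In the directed model, $u$'s out-neighborhood is exactly the head set of $S_u$. This gives $|S_u|\ge\phi(u)$ in every profile where $u$ is greedy connected, regardless of the other agents. Your finite-descent argument shows that the profile in which everyone buys $\Phi(u)$ is navigable. \Cref{lem:phi_edges}, which is valid in the directed setting, supplies the matching upper bound in a NE.

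\textbf{What each approach buys.} Your approach yields more information. NE and SO both have cost exactly $\sum_u\phi(u)$, and every agent buys exactly $\phi(u)$ edges in either. A NE also exists constructively, which is the construction the paper later uses in its polynomial-time corollary.

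It is also more careful at the step the paper passes over quickly. In the paper's NE$\Rightarrow$SO direction, one still has to check that $u$ stays greedy connected after switching alone to $S^*_u$ while everyone else keeps their NE strategies. This does hold: a greedy path from a neighbor $v$ to a target $t$ with $d(v,t)<d(u,t)$ cannot pass through $u$. The paper leaves this implicit in its appeal to \Cref{independent}. Your SO$\Rightarrow$NE direction needs no statement about the other agents at all, because the bound $|S'_u|\ge\phi(u)=|S_u|$ depends only on $u$'s own edges.

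The paper's argument, in turn, is shorter. It uses only the decoupling property and never needs greedy routing degrees.
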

\begin{proof}
    By \Cref{independent}, we have that for every agent $u\in \P$, as long as agent~$u$ is greedy connected, agent~$u$'s strategy does not influence the other agents. Using this property we will show that every social optimum is a NE and every NE is a social optimum.

    First, assume towards a contradiction that there exists a NE that is not a social optimum. Then, there must be an agent~$u$ whose social cost is lower in the social optimum than in the NE. By deviating to the strategy used in the social optimum, agent~$u$ could reduce its cost, contradicting that the original strategy profile was a NE.

    Next, suppose that there is a social optimum that it is not a NE. Then, there is an agent~$u$ that could reduce its cost by changing its strategy. Since, such a strategy change does not influence the other agents, the total social cost decreases contradicting the assumption that the original strategy profile was a social optimum.
\end{proof}
\noindent This directly implies the following positive result.
\begin{restatable}{corollary}{betacoltwo}
    The price of anarchy in the directed version is 1.
\end{restatable}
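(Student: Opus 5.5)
The corollary follows from \Cref{socialoptimum} once we know that the set NE is non-empty. The plan is to establish this existence and then read off the ratio directly from the definition of the price of anarchy.

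\emph{Existence of a social optimum.} The strategy space is finite: each agent $u$ chooses a subset of the $n-1$ possible directed edges $(u,v)$. There is also at least one profile of finite social cost, namely the complete directed graph. In that graph every agent has a direct edge to every other node, and a single edge is trivially a greedy routing path, so $\omega=0$ for all agents. Hence a profile $\s^*$ minimizing $c(\s)$ exists and has finite cost; this is a social optimum.

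\emph{Existence of a NE.} By the second direction of \Cref{socialoptimum}, this $\s^*$ is a NE. The argument there relies on \Cref{independent}: a unilateral deviation by a greedy-connected agent does not affect any other agent's reachability. Consequently a profitable deviation would lower the social cost, which contradicts optimality. So NE is non-empty and the price of anarchy is well defined.

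\emph{Computing the ratio.} By the first direction of \Cref{socialoptimum}, every $\s\in$ NE is itself a social optimum, so $c(\s)=c(\s^*)$. Every NE has finite cost, since greedy routing must be enabled in it. Therefore the maximum of $c(\s)/c(\s^*)$ over all $\s\in$ NE equals $1$. If one prefers to avoid the degenerate case $n=1$, in which both costs are $0$, we assume $n\ge 2$. Then the NNG edges are forced and $c(\s^*)\ge 1$, so the ratio is a genuine quotient.

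The only point needing care is the non-emptiness of NE, and this is settled by the finiteness argument above. Everything else is immediate from \Cref{socialoptimum}.
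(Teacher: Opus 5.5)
Your proposal is correct and follows the paper's route: the corollary is read off directly from \Cref{socialoptimum}. The ``every SO is a NE'' direction gives that NE is non-empty, and the ``every NE is a SO'' direction gives ratio $1$. Your finite-strategy-space argument for the existence of an SO only spells out what the paper leaves implicit when it says this theorem shows equilibria exist. Also, for $c(\mathbf{s}^*)\ge 1$ you do not need to invoke NNG edges: with $n\ge 2$, every agent must buy at least one outgoing edge to reach anyone.
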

Next, we show that best response cycles cannot exist. This yields that stable states can be found via natural game dynamics.
\begin{restatable}{lemma}{betacolthree}
Best response cycles do not exist.
\end{restatable}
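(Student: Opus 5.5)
We argue this via a potential function built from \Cref{independent}, in the directed setting. The key observation is that once an agent is greedy connected, changing its strategy has no effect on which targets other agents can reach through it, as long as it stays greedy connected. A best response of an agent~$u$ always has finite cost, because buying edges to all other nodes enables greedy routing. So after any agent moves, that agent is greedy connected.

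As the potential we take the pair $\Phi(\s) = (\,|\{u : u \text{ not greedy connected in } G(\s)\}|,\ c'(\s)\,)$ and order it lexicographically. Here $c'(\s)$ is the total number of edges bought by greedy connected agents. The plan is to show that every best-response move that strictly improves the mover's cost strictly decreases $\Phi$. A cycle consisting only of improving moves is then impossible. We treat a best response path in the usual way as consisting of improving moves, since a non-improving switch to another best response is not a genuine step of the dynamics. If needed, we can restrict to best responses that differ from the current strategy only when the cost strictly drops.

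First we show that no greedy connected agent loses greedy connectivity when $u$ deviates. The new strategy $S_u'$ makes $u$ greedy connected. Take any agent $w$ and a greedy path from $w$ to a target $t$ that passes through $u$. The suffix of that path from $u$ to $t$ can be replaced by a greedy path from $u$ to $t$ in the new profile, and the distances still strictly decrease. This is exactly the argument in \Cref{independent}. Paths that avoid $u$ are unaffected. Hence the set of greedy connected agents only grows, and $u$ itself ends up connected.

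Next we split into two cases. If $u$ was not greedy connected before the move, the first coordinate of $\Phi$ strictly decreases. If $u$ was already connected, its cost strictly decreased, so $|S_u|$ dropped. By the previous step the first coordinate cannot increase. The second coordinate decreases unless newly connected agents add their edges, but in that case the first coordinate strictly drops. In both cases $\Phi$ strictly decreases lexicographically.

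The main obstacle is the subtle handling of the claim that $u$'s change never breaks other agents' connectivity. We must check that the replaced suffix is valid, that is, that $d(u,t) < d(\text{pred},t)$ still holds along the original path, which is true because it is the original prefix. A second subtlety is that the definition of best response path allows weak moves. If the authors intend non-strict moves, we instead argue that in any cycle all agents are greedy connected from some point on, and then an agent switching to another best response leaves $\Phi$ unchanged. We would then rule this out by noting that such a profile is already a NE, so the cycle is trivial.
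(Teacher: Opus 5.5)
Your argument is correct under the standard convention that only agents who can strictly lower their cost move. It rests on the same key fact as the paper: by \Cref{independent}, an agent that is greedy connected after its move cannot destroy any other agent's greedy paths. In the directed model only $u$ controls the out-edges of $u$, and a greedy path visits $u$ at most once, so the prefix up to $u$ survives and can be glued to $u$'s new greedy path.

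The difference is in the packaging. The paper argues by contradiction that no agent ever returns to an earlier strategy: once connected, an agent only moves to strategies with strictly fewer edges. Agents that start disconnected in the initial profile are handled by a separate, rather informal exception. Your lexicographic potential (number of disconnected agents, total edges of connected agents) merges both phases into one strictly decreasing quantity, so the initial-profile case needs no special treatment. It also gives more than the statement. Any improving move, not just a best response, leaves the mover greedy connected, so your argument yields the finite improvement property. It also gives an explicit $O(n^3)$ bound on the length of any improving path: the first coordinate drops at most $n$ times, and in between the second coordinate strictly decreases within $\{0,\dots,n(n-1)\}$.

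You should drop the fallback for non-strict moves, because under that reading the lemma is simply false. Take $u=(0,0)$, $v_1=(1,0)$, $v_2=(1,0.1)$ with $S_{v_1}=\{(v_1,v_2),(v_1,u)\}$ and $S_{v_2}=\{(v_2,v_1)\}$. Then both $\{(u,v_1)\}$ and $\{(u,v_2)\}$ are best responses of $u$, and $u$ can alternate between them forever. No argument can rule such a cycle out. Your claim that the profile must then be a NE is also unjustified, since agents that never move in the cycle need not be playing best responses. The right fix is to state the strict-improvement convention up front. The paper's proof also relies on it implicitly when it says an agent has no incentive to switch to a strategy with the same number of edges.
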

\begin{proof}
For a best response cycle to exist, there must be at least one agent $u$, that in some round $r$ plays a strategy that it had played in a previous round. 

Let $S_u$ denote the strategy of $u$ at the beginning of round $r$, and let $S_u'$ denote the previous strategy that $u$ is willing to play again. Without loss of generality, suppose that $S_u$ was not the initial strategy that was assigned to agent $u$. Suppose that $|S_u|\leq|S_u'|$. In this case, agent~$u$ changes its strategy to a strategy with at least as many edges as its current strategy. An agent $u$ would only choose a strategy with more or the same number of edges if although $u$ was greedy connected when using strategy $S_u$ in a previous round, it is no longer greedy connected under $S_u$  in round $r$. 

Suppose towards a contradiction, that there exists an agent $w$ that was reachable from $u $ by a greedy routing path in $S_u$ in a previous round, but it is no longer reachable in round $r$. Then, there must exist an agent $x$ on that greedy routing path from $u$ to $w$ whose strategy changed in some round before $r$. Since, agent $x$ has no incentive to change its strategy to a strategy that does not enable greedy routing, its new strategy must continue to support greedy routing. By \Cref{independent}, we have that even after the strategy change of agent $x$, a greedy routing from $u$ to $w$ via $x$ still exists, contradiction. The argument is still valid if more than one agents along the greedy routing path from $u$ to $v$ change their strategy.

Therefore, all greedy routing paths from $u$ using edges in $S_u$ remain valid in round $r$, so agent $u$ has no incentive to change its strategy to a strategy with more edges or the same number of edges.

Now, consider the case where $|S_u| > |S_u'|$. Since $S_u'$ is a previously played strategy, agent $u$ must have changed its strategy from $S_u'$ to a strategy with more edges in an earlier round. By the argument above, such a change cannot occur, i.e., an agent has no incentive to change its strategy to a strategy that uses more edges. Thus, this case also leads to a contradiction. 

Therefore, there is no agent $u$ that changes its strategy to a strategy that it had before, i.e., best response cycles cannot exist. 

If $S_u'$ was the initial strategy assigned to $u$, the only possible exception occurs when $S_u'$ would have been a best response, but some agents associated with the edges in $S_u'$ where not greedy connected initially. In that case, $u$ changes its strategy in order to become greedy connected and when the agents associated with the edges in $S_u'$ become greedy routing connected, then $u$ changes its strategy back to $S_u'$. This can happen at most once for every agent $u$. After this, the previous arguments apply, showing that best response cycles cannot exist.  
\end{proof}
\subsection{Computational Complexity}
\paragraph*{\textbf{General Metrics}} Here, we investigate the computational complexity of computing a best response in general metric spaces, and computing a NE in Euclidean metrics. 

Although computing the greedy routing degree can be done in polynomial time, computing a best response is a hard problem, even in one-dimensional metric spaces.
\begin{restatable}{theorem}{betatheortwo}
    For Euclidean metrics (hence also for general metrics) computing a best response of an agent is NP-hard.\label{bestresponsehardness}
\end{restatable}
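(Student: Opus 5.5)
We reduce from \textsc{Set Cover}: given a universe $U=\{e_1,\dots,e_n\}$, sets $C_1,\dots,C_m\subseteq U$ and an integer $k$, decide whether $k$ sets cover $U$. We may assume every element lies in some set. We build an instance on the real line (1D Euclidean, hence also a general metric) with a distinguished agent $u$. All other agents get fixed directed strategies, and we show that $u$'s best response costs exactly $1+\mathrm{OPT}_{\mathrm{SC}}$.

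The underlying idea is to use \Cref{independent}-style reasoning directly. If $u$ buys an edge $(u,x)$, then $u$ reaches a target $t$ through $x$ exactly when $d(x,t)<d(u,t)$ and $x$ already has a greedy routing path to $t$ in the fixed network. The other agents need not be greedy connected, since only $u$'s best response matters.

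\emph{Construction.} Place $u$ at $0$.
\begin{itemize}
\item For each set $C_j$, place a \emph{set node} $S_j$ at $50+j/m$.
\item For each $j$, place a \emph{guard node} $Q_j$ at $60+j/m$.
\item Place a \emph{hub} $h$ at $62$.
\item For each element $e_i$, place a cluster of $M=m+2$ copies $e_i^1,\dots,e_i^M$ at positions $100i+t\varepsilon$, for $t=1,\dots,M$ and a tiny $\varepsilon$.
\end{itemize}
The fixed strategies are as follows. $S_j$ buys directed edges to all copies of all elements in $C_j$. $Q_j$ buys the single edge $(Q_j,S_j)$. $h$ buys edges $(h,Q_j)$ for all $j$. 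Element copies buy nothing.

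\emph{Key steps.} First we check which routes are greedy.
\begin{itemize}
\item The path $(u,h,Q_j,S_j)$ is greedy toward $S_j$, because $d(h,Q_j)\approx 2<60\approx d(u,Q_j)$ and $d(Q_j,S_j)\approx 10<12\approx d(h,S_j)$. So one edge to $h$ lets $u$ reach every $Q_j$ and every $S_j$.
\item Nothing reached through $h$ or through $Q_j$ continues greedily toward an element copy. Every copy lies to the right of $62$, while $h$'s out-neighbours $Q_j$ and $Q_j$'s out-neighbour $S_j$ lie to the left, so the first such step increases the distance to the copy.
\item $u$ cannot reach $h$ except by buying $(u,h)$, since no agent buys an edge to $h$. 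So every strategy with finite cost contains $(u,h)$.
\item An edge $(u,S_j)$ lets $u$ reach exactly the copies of the elements of $C_j$. These are greedy since $S_j$ lies between $u$ and each copy, and the edges from $S_j$ are direct.
\item An edge to a copy $e_i^t$ reaches only that copy, since copies have no outgoing edges.
\end{itemize}
Hence a feasible strategy for $u$ is $(u,h)$ together with a collection of set edges and copy edges such that every copy of every element is reached.

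Next we compare costs. A set cover of size $k$ gives cost $k+1$. Conversely, take any strategy of cost $1+k'$. If it covers some element $e_i$ using no set edge containing $e_i$, then it pays at least $M>m$ edges on that cluster. We can replace those $M$ edges by one edge to some $S_j$ with $e_i\in C_j$ without increasing the cost. Repeating this yields a set cover of size at most $k'$. Therefore $u$'s best response cost equals $1+\mathrm{OPT}_{\mathrm{SC}}$, and computing a best response is NP-hard. The instance clearly has polynomial size.

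\emph{Main obstacle.} The delicate part is letting $u$ reach the set nodes themselves without creating a shortcut. A naive hub with edges to all $S_j$ would let $u$ reach every element through hub $\to S_j\to$ element, which destroys the reduction. The guard nodes $Q_j$ solve this. Each $Q_j$ sits closer to the elements than $S_j$, so the step $Q_j\to S_j$ is greedy toward $S_j$ but never toward an element. The positions chosen above must be checked carefully to confirm both distance inequalities.
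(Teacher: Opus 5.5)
Your reduction is correct and is essentially the paper's: a \textsc{Set Cover} reduction on the line, where set nodes have edges to their elements and guard nodes sit between the set nodes and the elements, so the guard-to-set step is greedy toward the set node but never toward an element. The differences are minor. Your hub $h$ replaces the paper's $m$ forced edges from $u$ to the guards $Q'_i$, so the best response costs $1+\mathrm{OPT}$ rather than $m+\mathrm{OPT}$. Your clusters of $M$ copies make direct element edges strictly dominated, which replaces the paper's cost-neutral exchange of $(u,x_j)$ for $(u,Q_k)$.
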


\begin{proof}
\begin{figure}[h]
    \centering
    \includegraphics[width=\linewidth]{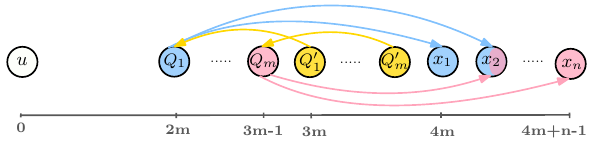}
    \caption{Illustration of the reduction from \textsc{set cover}.}
    \label{fig:bestresponse}
\end{figure}
We will reduce from the NP-hard problem $\textsc{set cover}$. Let $P=(\{x_1,\dots, x_n\}, \{Q_1,\dots, Q_m\})$ be an instance of $\textsc{set cover}$ where $\{x_1,\dots, x_n\}$ is the set of elements that need to be covered and $ \{Q_1,\dots, Q_m\}$ is a collection of subsets of the elements that can be selected for covering elements, i.e., we have $\{x_1,\dots,x_n\} \subseteq Q_1 \cup \dots \cup Q_m$. Given a \textsc{set cover}-instance, we construct a directed graph $G(V,E)$ as follows (see \Cref{fig:bestresponse}). We create a node for every set $Q_i$ and for every element $x_j$, for $1 \leq i\leq m$ and  $1 \leq j\leq n$. Moreover, we add an auxiliary node $Q'_i$ for every node $Q_i$, and additionally a single node~$u$. The nodes are placed on the line, so that node~$u$ is at position $0$, each $Q_i$ is at position $2m+i-1$, each $Q'_i$ is at position $3m+i-1$ and each $x_j$
is at position $4m+j-1$. Next, we add an outgoing edge from every node $Q_i$ to every node $x_j\in Q_i$, and an outgoing edge from every node $Q'_i$ to the corresponding node $Q_i$. 

Let $B$ be a best response for agent $u$. First, observe that every node $Q'_i$ does not have any incoming edge, therefore agent $u$ has to buy an edge to each $Q'_i$ in order to have a greedy routing path to it. Since $ d(Q'_i, Q_i) = m  <   2m+i-1= d(u, Q_i)$, the path $(u, Q'_i, Q_i)$ is a greedy routing path from $u$ to $Q_i$. Therefore, by building the edges to every $Q'_i$, agent~$u$ has a greedy routing path to every $Q'_i$ and $Q_i$. Note, that a path via $Q'_i$ and $Q_i$ cannot be a greedy routing path to any node $x_j$.

Next we show that the best response $B$ can be modified so that it does not contain any edge to a node $x_j$. For this, assume that $B$ contains the edge $(u,x_j)$. Since $\{x_1,\dots,x_n\} \subseteq Q_1 \cup \dots \cup Q_m$, there exists a node~$Q_k$ having a directed edge to $x_j$. Moreover, we have $d(Q_k, x_j) = 2m+j-1 < 4m+j-1 = d(u, x_j)$ and thus agent~$u$ could exchange the edge $(u,x_j)$ with the edge $(u,Q_k)$ to obtain the greedy routing path $(u, Q_k, x_j)$  to node~$x_j$. Since node~$x_j$ does not have any outgoing edges, it does not enable greedy routing for agent~$u$ to any other node. Thus, exchanging edge $(u,x_j)$ with $(u,Q_k)$ yields a strategy that enables greedy routing for agent $u$ and that has the same cost as strategy $B$. 
Thus, from now on we can assume that the best response $B$ has no edges to nodes $x_j$.

Let $C$ be a minimum set cover of $\{x_1,\dots,x_n\}$. We show that the number of edges that agent~$u$ builds in a best response $B$ that contains no edges to nodes $x_j$ in the corresponding metric instance is exactly $|C|+m$.

Let $B' = \{Q \mid (u,Q)\in B\}$, i.e., $B'$ is the set of nodes to which agent~$u$ buys an edge under best response $B$. First, observe that $B'\setminus\{Q'_1,\dots, Q'_m\}$ is a subset of $\{Q_1,\dots,Q_m\}$ that covers the set $\{x_1,\dots,x_n\}$. This is true, since otherwise agent~$u$ would not be greedy connected, in particular, at least one node $x_j$ would not be reachable via a greedy routing path. 

Next we show that $B'\setminus\{Q'_1,\dots, Q'_m\}$ is indeed a minimum set cover of $\{x_1,\dots,x_n\}$. Suppose towards a contradiction that there exists a set cover $C$ of $\{x_1,\dots,x_n\}$ such that $|C|< |B'|- m$. Then, buying edges to all the nodes in $C\cup \{Q'_1,\dots, Q'_m\}$ would be a better response for agent~$u$ than strategy~$B$, since agent~$u$ still has a greedy routing path to every $Q_i$ and $Q'_i$ node via the edges to each $Q'_i$ and, by building the edges to the nodes in $C$, agent~$u$ will have a greedy routing path to every $x_j$ node, a contradiction. 

With the above exchange argument, it follows that any best response of agent~$u$ can be converted in polynomial time into a best response $B$ that does not buy edges to $x_j$ nodes. From this, the corresponding set $B' = \{Q \mid (u,Q)\in B\}$ can be efficiently computed. Finally, $B' \setminus \{Q'_1,\dots,Q'_m\}$ is a minimum set cover of the initial set cover instance. Thus, if we could compute a best response for
agent~$u$ in polynomial time, then we could compute a minimum set cover in polynomial time.
\end{proof}
\paragraph*{\textbf{Euclidean Metrics.}} While computing a best response is hard for Euclidean metrics, we now prove the contrasting result that in Euclidean metrics, computing a NE, a social optimum, and deciding if a given strategy profile is a NE is efficiently computable.
\begin{restatable}{corollary}{betacolfour}
    In Euclidean metrics of dimension $D>0$, computing a social optimum, and thus also a NE, is polynomial time computable.
\end{restatable}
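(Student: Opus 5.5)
Since the social optimum and the set of Nash equilibria coincide in the directed case by \Cref{socialoptimum}, it suffices to compute a social optimum. The key observation is that in the directed variant the social cost decomposes over agents. The plan is to show that the profile in which every agent~$u$ buys exactly a minimum greedy routing set $\Phi(u)$ is a social optimum, and that it can be found in polynomial time using \Cref{sixedges} and the subsequent remark.

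\textbf{Lower bound.} In any profile where greedy routing is enabled for $u$, agent~$u$ needs, for every target $w\neq u$, a first hop $v$ with $d(v,w)<d(u,w)$. Hence the set of heads of $u$'s out-edges is a greedy routing set, so $|S_u|\geq \phi(u)$. The social cost of any feasible profile is therefore at least $\sum_{u}\phi(u)$.

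\textbf{Upper bound and feasibility.} Let each agent buy the edges to the nodes of $\Phi(u)$. We show that every agent is greedy connected by induction on the distance to the target. Fix a target $w$ and order the agents by $d(\cdot,w)$. The agent $w$ itself is trivial. For $u\neq w$, $\Phi(u)$ contains some $v$ with $d(v,w)<d(u,w)$. Either $v=w$, or $v$ has a greedy routing path to $w$ by induction, and prepending the edge $(u,v)$ keeps the path greedy. So this profile is feasible with cost exactly $\sum_u\phi(u)$, and hence it is a social optimum and also a NE.

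\textbf{Computation.} In $\mathbb{R}^D$, $\phi(u)\le K(D)$ by \Cref{cosine-rule} together with the kissing-number argument. So we can enumerate all subsets of at most $K(D)$ other points and test each one against all targets $w$, taking $O(n^{K(D)+2})$ time, which is polynomial for fixed $D$. This yields $\Phi(u)$ for every $u$.

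The main obstacle is the dependence on $D$. The bound $K(D)\le 2^{0.401D}$ makes the algorithm polynomial only when $D$ is treated as a constant, and this should be stated explicitly. The feasibility induction is the only genuinely structural step, and it is routine.
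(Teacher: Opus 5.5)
Your proposal is correct and follows the paper's approach: every agent buys a minimum greedy routing set $\Phi(u)$, found by brute force over subsets of size at most $K(D)$, and the per-agent bound $|S_u|\ge\phi(u)$ shows this profile is optimal (the paper argues it is a NE and then invokes \Cref{socialoptimum}, while you argue optimality directly). Your induction on $d(\cdot,w)$ spells out the step the paper leaves implicit via \Cref{independent}, namely that all agents are simultaneously greedy connected, and your caveat that $D$ must be treated as a constant is appropriate.
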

\begin{proof}
    By \Cref{independent}, we have that the strategy of an agent~$u$ as long as agent~$u$ is greedy connected does not influence the strategies of the other agents. Moreover, by \Cref{sixedges},  we have that computing a minimum greedy routing set is polynomial time computable. By definition of a minimum greedy routing set, no agent can be greedy routing connected by building fewer edges. Therefore, computing a minimum greedy routing set for every agent~$u$ and building the corresponding edges yields a NE. Furthermore, by \Cref{socialoptimum}, we have that every NE is a social optimum and vice versa. Thus, both can be computed in polynomial time.
\end{proof}
Since in a NE every agent must buy the edges in its minimum greedy routing set, and because this can be computed efficiently, we get the following. 
\begin{corollary}
In Euclidean metrics of dimension $D>0$, deciding if a strategy profile is a NE is polynomial time computable.
\end{corollary}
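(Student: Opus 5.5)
The plan is a direct polynomial-time check. By the preceding discussion together with \Cref{independent} and \Cref{socialoptimum}, in the directed Euclidean setting a profile $\s$ is a NE exactly when every agent $u$ is greedy connected in $G(\s)$ and satisfies $|S_u| = \phi(u)$. So the algorithm has two tests. First, verify that greedy routing is enabled in $G(\s)$. Second, for every agent $u$, compute $\phi(u)$ via \Cref{sixedges} (or its higher-dimensional remark, which enumerates subsets of size at most $K(D)$) and check that $|S_u| = \phi(u)$.

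I first establish the characterization.

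\emph{Necessity.} In any NE, greedy routing is enabled, as noted in the model section. Then \Cref{lem:phi_edges} gives $|S_u| \le \phi(u)$. For the reverse inequality, suppose $u$ is greedy connected via $S_u$. Then every $w \ne u$ is reached along a greedy path whose first hop $v$ satisfies $d(v,w) < d(u,w)$. Hence the set of edges in $S_u$ forms a greedy routing set, and so $|S_u| \ge \phi(u)$.

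\emph{Sufficiency.} Assume all agents are greedy connected and each $|S_u| = \phi(u)$. Any deviation $S'_u$ that keeps $u$ greedy connected is itself a greedy routing set by the same argument, so $|S'_u| \ge \phi(u)$. A deviation that disconnects $u$ costs $\infty$. In either case $u$ cannot improve, so $\s$ is a NE.

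Next, I bound the running time.

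\emph{Checking greedy connectivity.} For each pair $(u,t)$, consider the subgraph of edges $(x,y)$ with $d(y,t) < d(x,t)$. This subgraph is acyclic, since distances to $t$ strictly decrease along its edges. A greedy routing path from $u$ to $t$ exists iff $t$ is reachable from $u$ in this subgraph, which a BFS decides. That gives $O(n^2 \cdot |E|)$ time overall, or better, $O(n\cdot|E|)$ by doing one reverse BFS from each target $t$.

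\emph{Computing $\phi(u)$.} For fixed $D$, enumerate subsets of size at most $K(D)$, which is a constant, and test the greedy-routing-set condition in $O(n)$ per neighbor. This is polynomial.

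The main subtlety I expect is the claim that any strategy under which $u$ is greedy connected contains a greedy routing set. The worry is that a greedy path might use edges other than those of $u$. But the first hop of every greedy path from $u$ is an edge of $S_u$, and that hop is already strictly closer to the target, so the claim follows directly from the definitions.

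The other point to address is the dependence on dimension. The bound $K(D) \le 2^{0.401D}$ yields polynomial time only when $D$ is treated as fixed. I would state this assumption explicitly, matching the corresponding remark after \Cref{sixedges}.
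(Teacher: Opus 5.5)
Your proposal is correct and follows the paper's own one-line justification for the directed case: in a NE every agent buys a minimum greedy routing set, and $\phi(u)$ can be computed in polynomial time via \Cref{sixedges} and the kissing-number remark. Your version is more careful than the paper's in three ways: comparing $|S_u|$ with $\phi(u)$, rather than with ``the'' minimum set, handles non-unique minimum greedy routing sets; you explicitly check that every agent is greedy connected; and you correctly note that the running time is polynomial only for fixed $D$.
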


\section{Greedy Routing: Undirected Edges}
In this section we explore the much more challenging undirected variant of our model. We start with results on two-dimensional Euclidean metrics. Later, we will generalize some of our findings to higher dimensions and general metric spaces.
\subsection{Price of Anarchy}
\paragraph*{\textbf{2D Euclidean Metric.}}
We start with a technical lemma that will be essential for deriving upper bounds on the price of anarchy. 
\begin{restatable}{lemma}{gamalemasix}
\label{lem_non_NNG_edges}
Let $G(\textbf{s})$ be a NE network over a set of points $\mathcal{P}$ and let  $G_{NNG}(\mathcal{P})$ be the corresponding NNG. Let $C\subseteq\mathcal{P}$such that all nodes in $C$ belong to the same connected component of $G_{NNG}(\mathcal{P})$. Then, the total number of edges  built by the agents in $C$ in $G(\s)$ that do not appear in $G_{NNG}$ is at most $2+3|C|$.
\end{restatable}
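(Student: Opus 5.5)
The plan is to prove a per-agent bound and then sum it over $C$ using the tree-like structure of the NNG. The per-agent bound I aim for is that in a NE each agent $u$ owns at most $5-\deg_{NNG}(u)$ edges outside $G_{NNG}$. Summing this over a whole NNG component with $k$ nodes, which in general position is a tree with $k-1$ edges, gives
\[
\sum_{u}\bigl(5-\deg_{NNG}(u)\bigr)=5k-2(k-1)=3k+2,
\]
which is exactly the claimed bound. So I would first reduce to this per-agent statement and then handle the passage from a whole component to an arbitrary subset $C$.

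\emph{Step 1: reduction to a covering problem.} By \Cref{re:consant_phi}, every NNG edge is present in every NE network, whoever owns it. Fix agent $u$ and let $A$ be a smallest set of non-NNG partners such that $A$ together with the NNG neighbours of $u$ forms a greedy routing set for $u$. Consider the deviation in which $u$ keeps only its owned NNG edges and buys edges to $A$. The argument of \Cref{lem:phi_edges}, together with \Cref{independent}, shows that $u$ stays greedy connected: every neighbour is greedy connected in the NE, and its reachability does not depend on $u$'s edges. Hence in a NE $u$ owns at most $|A|$ non-NNG edges, and it suffices to show $|A|\le 5-\deg_{NNG}(u)$.

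\emph{Step 2: the angular covering argument.} Let $v_1=N(u)$. Every point $w$ with $\angle wuv_1<\pi/3$ satisfies $d(u,v_1)\le d(u,w)$, so by \Cref{cosine-rule} $w$ is reached via $v_1$. Thus $v_1$ alone covers an open double cone of width $2\pi/3$. Split the remaining $4\pi/3$ into four cones of angle $\pi/3$, and take in each cone the point closest to $u$; by \Cref{cosine-rule} this covers the cone, giving $|A|\le 4 = 5-1$. For $u$ with further NNG neighbours $v_2,\dots,v_t$ (points having $u$ as their nearest neighbour), I want each such neighbour to make one cone superfluous. Two standard facts should help: the NNG neighbours of $u$ are pairwise separated by angle at least $\pi/3$ at $u$, and $d(v_i,u)\le d(v_i,x)$ for all $x$. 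I would rotate the cone partition so that each $v_i$ lies on a cone boundary. I would then argue that in each cone the closest-to-$u$ point is either $v_i$ itself, or a point $x$ with $d(u,x)<d(u,v_i)$; in the latter case the angular sector beyond $v_i$ can be handed to $v_i$, which frees one cone per NNG neighbour.

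\emph{Step 3: main obstacle.} The hard part is making Step 2 rigorous when an NNG neighbour $v_i$ is not the closest point of its cone. The issue is that \Cref{cosine-rule} needs $d(u,v_i)\le d(u,w)$, so points nearer to $u$ than $v_i$ within the same sector are not automatically covered by $v_i$. My first attempt is a radial case split. Points farther from $u$ than $v_i$ within angle $\pi/3$ of $v_i$ go through $v_i$. For nearer points, the fact that $u$ is the nearest neighbour of $v_i$ should give $d(v_i,w)\ge d(u,v_i)>d(u,w)$; this should confine them to a narrow region that the adjacent cone's representative already covers. If the per-agent bound fails for some agents, I would fall back to a charging scheme. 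Every agent pays $4$ via Step 2, and then each non-mutual NNG edge refunds one unit to its endpoint for which it is not the nearest-neighbour edge. Since a component contains exactly one mutual nearest-neighbour pair, this leaves the additive surplus of $+2$.

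\emph{Step 4: from components to subsets.} The same bookkeeping has to be carried over to an arbitrary subset $C$ of a component, where $\deg_{NNG}$ restricted to $C$ no longer sums to $2(|C|-1)$. I would orient each NNG edge from a node to its nearest neighbour, so that every node has out-degree one. I would then charge per node: $3$ for the cones beyond its own nearest-neighbour edge, plus a surplus of at most $2$ attributed once to the mutual pair. This makes the bound additive over arbitrary subsets $C$, giving at most $2+3|C|$ non-NNG edges.
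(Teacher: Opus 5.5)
Your strategy is the paper's: a per-agent cone covering in which the nearest neighbour disposes of two $\pi/3$-cones and every further NNG neighbour of $u$ disposes of one more, followed by a sum over the component. Your bookkeeping $\sum_u(5-\deg_{NNG}(u))=5k-2(k-1)$ is the right way to finish. The paper's closing count (``one NNG neighbour: at most four; at least two: at most three'') would by itself only give $3|C|$ plus the number of NNG leaves.

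Step~1 is also fine. A neighbour $y$ with $d(y,w)<d(u,w)$ has a greedy path to $w$ that never visits $u$, so that path survives any change of $u$'s strategy.

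\textbf{Gap 1: the crux of Step~3 is left open.} Step~3 is the claim everything rests on, and you offer only a guess (``confine them to a narrow region''). Your fallback does not help. Refunding one unit to the head of every non-mutual NNG arc is numerically the same as your per-agent bound $4-(\deg_{NNG}(u)-1)=5-\deg_{NNG}(u)$, so it fails exactly when Step~2 fails.

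The missing observation, which the paper uses both for $v$ routing via $u$ and for $v$ routing via $x$, is this. Suppose $u$ is the nearest neighbour of $v_i$ and $\angle wuv_i<\pi/3$. Then $d(u,w)\ge d(u,v_i)$. Otherwise \Cref{cosine-rule}, applied at $u$ with nearer point $w$ and farther point $v_i$, gives $d(w,v_i)<d(u,v_i)$, contradicting the choice of $u$ as $v_i$'s nearest neighbour. So there are no nearer points in that sector at all. Applying \Cref{cosine-rule} again, $v_i$ covers the whole open sector of half-angle $\pi/3$ around it, exactly as $u$'s own nearest neighbour does.

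\textbf{The counting in Step~2 also needs repair.} A single six-cone partition generally cannot be rotated so that all $v_i$ lie on cone boundaries. Count over gaps instead:
\begin{itemize}
\item Let $t=\deg_{NNG}(u)$ and let $g_1,\dots,g_t$ be the angular gaps between consecutive NNG neighbours. Then $g_j\ge\pi/3$ by the separation you noted, and $\sum_j g_j=2\pi$.
\item Gap $j$ leaves an uncovered arc of length $(g_j-2\pi/3)^+$. This needs at most $\lceil 3g_j/\pi-2\rceil<3g_j/\pi-1$ cones.
\item Summing over gaps with a nonempty uncovered arc, and using $3g_j/\pi\ge 1$ for the other gaps, the total is strictly below $6-t$.
\end{itemize}
So $u$ needs at most $5-t$ non-NNG edges. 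This holds in general position, where also $t\le 5$.

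\textbf{Gap 2: Step~4 is wrong as stated.} The charge ``$3$ per node beyond its own nearest-neighbour edge'' fails for any node that is nobody's nearest neighbour. Its nearest neighbour covers only $2\pi/3$, and it may need four more edges. A node's savings come from its in-neighbours, and these need not lie in $C$. For $C$ consisting of three NNG-degree-one nodes of one component, your argument gives only $12>3\cdot 3+2$.

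The paper neither proves nor needs the statement for arbitrary subsets. Its proof concludes for ``any connected component $C$''. \Cref{poa-2d} applies the lemma only to the whole components $C_1,\dots,C_k$. For a whole component, $\sum_{u\in C}\deg_{NNG}(u)=2(|C|-1)$, and your summation closes the argument. So drop Step~4 and read $C$ as an entire component.
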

\begin{proof}  
\begin{figure}[h]
    \centering
\includegraphics[width=0.7\linewidth]{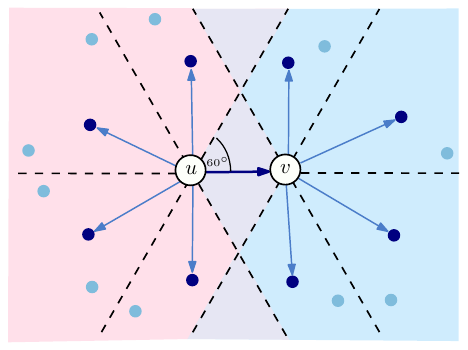}
    \caption{A NNG component of two nodes $u,v$. The maximum number of edges that have to be built in order for agents $u$ and $v$ to be greedy routing connected is 8.}
    \label{fig:neighbor_cone}
\end{figure} 
Let $u,v\in\mathcal{P}$ be two nodes such that $u$ has $v$ as its nearest neighbor in $G_{NNG}(\mathcal{P})$. We partition the space around $u$ into six cones $X_0, X_1,\dots, X_5$ each with an angle of $\frac{\pi}{3}$, arranged counterclockwise starting from the line passing through $u$ and $v$. Since $v$ is the nearest neighbor of $u$, it holds that $d(u, w)\geq d(u,v)$, for all $w\in\mathcal{P}$, with equality to apply only when $\angle{wuv}=\frac{\pi}{3}$. See \Cref{fig:neighbor_cone}.

First, we will consider the case where every node has exactly one nearest neighbor. By applying \Cref{cosine-rule}, we have that for every node~$w\in\mathcal{P}$ lying within cones $X_0, X_5$ there is a greedy routing path from $u$ to $w$ via $v$. Moreover, the same lemma implies that it suffices for $u$ to have at most one edge into each of the remaining cones $X_1, X_2, X_3, X_4$ to ensure that there is a greedy routing path from $u$ to every other node of the network. Thus, besides the edge~$\{u,v\}$  agent~$u$ needs at most four additional edges. 

Next, we partition the space around $v$ into six cones $X_0', X_1',\dots, X_5'$ each with an angle of $\frac{\pi}{3}$, arranged clockwise this time, starting from the line passing through $u$ and $v$ ($X'_0$ overlaps with $X_0$ and $X_1$).

We now show that for every $p\in\mathcal{P}$ lying within the cones $X'_0,X'_5$ there is a greedy routing path from $v$ to $p$ via $u$.
Suppose towards a contradiction that there is a node $p\in\mathcal{P}$ that lies within the corresponding cones $X'_1, X'_5$ such that $p$ is not reachable via greedy routing from $v$ through $u$, i.e., $d(v,p)\leq d(u,p)$. Then, by \Cref{cosine-rule}, it follows that $d(u,p)\leq d(u,v)$, which contradicts the assumption that $u$ has $v$ as its nearest neighbor. Using the same argument as before, we get that $v$ also needs at most four additional edges to ensure that there is a greedy routing path from $v$ to every other node of the network. Therefore, a nearest neighbor component $C$ consisting of two nodes needs at most $8$ edges, i.e., $3|C|+2$ many.

We now consider components with more than two nodes. Suppose $x\in\mathcal{P}$ has node~$v$ as its nearest neighbor. We define new cones $Q_0,\dots, Q_5$, around $v$, and  $Q'_0,\dots, Q'_5$ around $x$, starting from the line passing through $x$ and $v$. By the same argument as before, $x$ needs at most four additional edges, and there is a greedy routing path from $v$ to any node $y$ within $Q_0, Q_1$ via $x$. As shown above, node~$x$ cannot lie within $X'_0, X'_1$. Thus, it must lie in another cone. Consequently, one of the cones $Q_0, Q_5$ overlaps with at least one cone from $X_1,\dots, X_4$. This reduces the number of additional edges $v$ needs to remain greedy connected by at least one.

The case where a node has more than one nearest neighbor is equivalent as long as these neighbors are not also each other's  nearest neighbors. Otherwise, for example if $x$ has also $u$ as a nearest neighbor, then the number of edges that $u$ needs would be also decreased by one. Since we aim to compute an upper bound on the number of additional edges, we  exclude this case from our calculations.  
 
Extending this argument, we get that any node in a connected component that has only one neighbor in $G_{NNG}(\mathcal{P})$ needs at most four additional edges, while a node with at least two such neighbors needs at most three edges. Therefore, any connected component $C$ of size $|C|$  needs at most $3|C|+2$ additional edges. 
\end{proof}

We will now use \Cref{lem_non_NNG_edges} to prove one of our main results: That NE networks are close to optimal in terms of density. In the worst case, selfish behavior only increases the number of edges by 80\%. 
\begin{theorem}
In 2D Euclidean metrics, the PoA is at most $1.8$. \label{poa-2d}
\end{theorem}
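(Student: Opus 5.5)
We bound the NE cost from above and the social optimum from below, working one connected component of the NNG at a time. Let $C_1,\dots,C_k$ be the components of $G_{NNG}(\P)$, with sizes $n_i=|C_i|\ge 2$; every point has a nearest neighbor, so there are no singleton components.

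\textbf{Lower bound on the optimum.} Every graph that enables greedy routing contains the NNG \cite{PapadimitriouR05}. In the undirected setting each edge is bought by exactly one agent, so the social cost equals the number of edges. The NNG restricted to $C_i$ is connected, hence has at least $n_i-1$ edges. We also need edges outside the NNG. Suppose greedy routing were enabled using only the NNG edges. Then the network would be disconnected whenever $k\ge 2$. If $k=1$, the network is a tree, and we would have to show that a tree component forces extra edges: take the endpoint of a diametral pair, or use a cone argument as in \Cref{lem_non_NNG_edges}. Both routes are delicate.

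I would instead aim for the cleaner bound $\text{SO}\ge \sum_i (n_i-1) + (\text{extra edges})$ and look for a per-component lower bound of the form $\text{OPT restricted to } C_i \ge \tfrac{5}{3}n_i$ or similar. The natural target comes from the upper bound. By \Cref{lem_non_NNG_edges}, a NE buys at most $(n_i-1)+3n_i+2=4n_i+1$ edges involving agents of $C_i$ (NNG edges plus non-NNG edges). Any lower bound we prove must therefore be compared against this quantity.

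\textbf{Upper bound on a NE and the ratio.} Summing over components, a NE has at most $\sum_i\bigl(4n_i+1\bigr)$ edges. To obtain a ratio of $1.8=9/5$, I need the optimum to be at least roughly $\tfrac{5}{9}(4n+k)$. A more plausible route is to use \Cref{lem:phi_edges} together with \Cref{re:consant_phi}. In a NE each agent $u$ buys at most $\phi(u)$ edges. In the optimum, every agent must have a neighbor closer than itself to each target, so $u$'s degree is at least $\phi(u)$. Summing degrees gives $2|E_{SO}|\ge\sum_u\phi(u)$, while the NE satisfies $|E_{NE}|\le\sum_u\phi(u)$. This only yields a ratio of $2$. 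To sharpen it to $1.8$, I would refine the accounting by splitting $\phi(u)=\phi^+(u)+(\phi(u)-\phi^+(u))$. NNG edges are shared by both endpoints in the degree count, but in a NE each NNG edge is paid for only once. Non-NNG contributions are bounded by \Cref{lem_non_NNG_edges}. Combining $\sum_u\phi^+(u)\ge 2(n-k)$-type counts with the bound $\phi(u)\le 6$ from \Cref{sixedges}, and using $3|C|+2$ per component, I would set up a ratio of the form $\frac{(n-k)+\sum(\text{non-NNG})}{\max\{(n-k)+\text{extra},\ \tfrac12\sum\phi\}}$. I would then maximize over feasible parameters to get $9/5$, with the extremal case expected at components of size $2$. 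There, $n_i=2$ gives NE cost at most $1+8=9$ and the optimum at least $5$, which presumably comes from $\sum\phi\ge 10$ for the pair.

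\textbf{Main obstacle.} The hard step is the lower bound on the optimum per component: showing that a size-$2$ component forces total optimal degree of about $10$, and in general that $\tfrac12\sum_{u\in C}\phi(u)$ together with shared NNG edges is at least $\tfrac59$ of $4|C|+1$. I would first attempt this via the cone structure from the proof of \Cref{lem_non_NNG_edges}. I would argue that in the worst case each of the four remaining cones around $u$ contains a point that must be covered, and that edges between distinct components can serve at most two agents simultaneously. This charging argument is where I expect the real difficulty.
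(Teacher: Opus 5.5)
There is a genuine gap, and it is the step you yourself call the main obstacle. You try to bound the optimum from below in absolute terms: at least $5$ edges per NNG pair via $\sum\phi\ge 10$, or NNG edges plus $\frac{1}{2}\sum_{u\in C}\phi(u)$ at least $\frac{5}{9}(4|C|+1)$. No such bound holds. For two points the optimum is a single edge. For points on a line at positions $0,1,3,7,15$, the NNG is a path that already enables greedy routing, so the optimum has $4$ edges, while $\frac{5}{9}(4\cdot 5+1)>11$. This instance also shows that a tree NNG component need not force any extra edge, so the plan in your first paragraph cannot work either. More generally, the optimum can have only $n-1$ edges even when $k=n/2$: take pairs $\{10j,10j+1\}$ on a line and join them into a path. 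So dividing your NE bound $4n+k$ by any lower bound on the optimum that depends only on $n$ and $k$ gives a ratio above $4$. Your fallback denominator $\max\{(n-k)+\mathrm{extra},\frac{1}{2}\sum_u\phi(u)\}$ does not repair this. The term $\frac{1}{2}\sum_u\phi(u)$ counts the NNG edges only with weight $\frac{1}{2}$, so on its own it just reproduces the factor $2$, and ``extra'' is never specified.

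The missing idea is to bound the optimum relative to the NE, by coupling both through a quantity that does not depend on the network. Let $\phi'(u)$ be the minimum number of neighbors $u$ needs beyond its NNG neighbors in order to reach every agent greedily. By \Cref{independent} this depends only on the point set, because in any network enabling greedy routing every neighbor of $u$ is greedy connected. In a NE, $u$ buys at most $\phi'(u)$ non-NNG edges, by the deviation argument of \Cref{lem:phi_edges} together with \Cref{re:consant_phi}. In an optimum, $u$ is incident to at least $\phi'(u)$ non-NNG edges. Write $X$ and $Y$ for the numbers of non-NNG and NNG edges of the NE. Then the optimum has at least $X/2$ non-NNG edges and exactly the same $Y$ NNG edges, so $\mathrm{PoA}\le\frac{X+Y}{X/2+Y}$. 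This bound is increasing in $X$ and decreasing in $Y$. Only now do you insert $X\le\sum_i(3|C_i|+2)$ from \Cref{lem_non_NNG_edges} and $Y\ge\sum_i(|C_i|-1)$, which gives $\frac{8n+2k}{5n}\le\frac{9}{5}$ since $k\le n/2$. Your size-two picture of $9$ against $5$ is exactly this ratio at $X=8$, $Y=1$. The $5$ is forced by the NE having $8$ non-NNG edges there; it is not an intrinsic lower bound on the optimum.
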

\begin{proof}
\begin{figure}[ht]
    \centering
    \includegraphics[width=\linewidth]{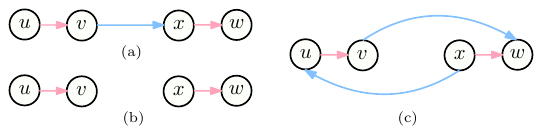}
    \caption{The (a) SO, (b) NNG, and  (c) a NE of a network with four nodes. Directed arcs show the edge-ownership.}
    \label{fig:ne_so_nng}
\end{figure}
 As stated above, any network that supports greedy routing must contain the NNG as a subgraph. Thus, every social optimum and every NE network also contains the NNG as a subgraph. See \Cref{fig:ne_so_nng} for an example.
 
  The NNG consists of connected components, each containing at least two nodes. Consider a NE network $G(\mathbf{s})$ defined over a set of points $\mathcal{P}$, and let $G_{NNG}(\mathcal{P})$ be the corresponding NNG. Partition $\mathcal{P}$ into $k$ disjoint sets $C_1, C_2,\dots, C_k$, 
  where $k$ is the number of connected components of $G_{NNG}(\mathcal{P})$. Each set $C_i$ contains the nodes of the $i$-th connected component, meaning that any two nodes $u,v\in C_i$ are in the same set $C_i$ if and only if they belong to the same connected component of $G_{NNG}(\mathcal{P})$. Then, the social cost is equal to:
   \begin{align*}
   c(\mathbf{s}) &= \sum_{i=1}^{k}\sum_{u\in C_i}|S_u(\s)| = \sum_{i=1}^{k}\sum_{u\in C_i}(|S_u^-(\s)|+|S_u^+(\s)|), \end{align*}
where $S_u^+(\s)$ denotes the set of edges built by agent~$u$ in $G(\mathbf{s})$ that also appear in $G_{NNG}(\mathcal{P})$, and $S_u^-(\s)$ denotes the edges built by agent~$u$ that do not appear in $G_{NNG}(\mathcal{P})$. Equivalently, we define $deg^+_{G(\textbf{s})}(u)$ to be the number of edges incident to $u$ in $G(\textbf{s})$ that also appear in $G_{NNG}(\mathcal{P})$ and $deg^-_{G(\textbf{s})}(u)$ the number of edges incident to $u$ in $G(\textbf{s})$ that do not appear in $G_{NNG}(\mathcal{P})$.

For every agent $u$ there is a set $W_u\subseteq \mathcal{P}$ of agents that agent~$u$ can reach via greedy routing paths starting with an edge from $G_{NNG}(\mathcal{P})$ incdident to agent~$u$. Let $\phi'(u)$ denote the minimum number of edges that $u$ needs to be incident to in order to be able to reach all other agents, i.e., the set $\P \setminus W_u$, via greedy routing paths.

In the worst case, agent~$u$ builds all $\phi'(u)$ many edges by itself. Furthermore, by \Cref{lem:phi_edges} and \Cref{re:consant_phi}, we know that agent~$u$ will not build more than $\phi'(u)$ non-NNG edges. 
Therefore,
\begin{align}
    |S^-_u(\s)| &\leq \phi'(u). \label{phi_1}
\end{align}
     Let $\s^*$ be a strategy profile that achieves the social optimum, and let $G(\mathbf{s^*})$ be the corresponding network. Not counting the edges from $G_{NNG}(\mathcal{P})$, every agent $u$ in $G(\mathbf{s^*})$ must be incident to at least $\phi'(u)$ edges in order to be greedy routing connected. 
     Thus, 
 \begin{align}
     \sum_{u\in \mathcal{P}}{\phi'(u)} &\leq \sum_{u\in \mathcal{P}}{deg^-_{G(\mathbf{s^*})}(u)} = 2\sum_{u\in \mathcal{P}}|S_u^-(\mathbf{s^*})|.\label{phi_2}
 \end{align}

 Combining \eqref{phi_1} and \eqref{phi_2} we get:
 \begin{align}
       \sum_{u\in \mathcal{P}}{\phi'(u)} &\leq 2\sum_{u\in \mathcal{P}}|S_u^-(\mathbf{s^*})|\nonumber\\
     \frac{1}{2}\sum_{u\in \mathcal{P}}{|S^-_u(\mathbf{s})| } &\leq \sum_{u\in \mathcal{P}}|S_u^-(\mathbf{s^*})| \nonumber \\
     \frac{1}{2}\sum_{u\in \mathcal{P}}{|S^-_u(\mathbf{s})| } + \sum_{u\in \mathcal{P}}|S_u^+(\mathbf{s})|&\leq \sum_{u\in \mathcal{P}}|S_u^-(\mathbf{s^*})|\nonumber\\ & \quad\quad+\sum_{u\in \mathcal{P}}|S_u^+(\mathbf{s^*})|,
     \label{phi_3}
 \end{align}
where the last inequality holds since $\sum_{u\in \mathcal{P}}|S_u^+(\mathbf{s^*})| = \sum_{u\in \mathcal{P}}|S_u^+(\mathbf{s})|$, as in both strategy profiles the same number of NNG edges must be bought. Therefore, for the PoA, we have:
\begin{align}
    PoA \leq \frac{c(\mathbf{s})}{c(\mathbf{s^*})} = \frac{\sum_{u\in \mathcal{P}}{|S_u(\mathbf{s})|}}{\sum_{u\in \mathcal{P}}{|S_u(\mathbf{s^*})|}}. \nonumber
\end{align}
 Using \eqref{phi_3} and summing over the connected components we get:
 \begin{align}
    PoA &\leq \frac{ \sum_{u\in \mathcal{P}}{|S^-_u(\mathbf{s})|}+\sum_{u\in \mathcal{P}}{|S^+_u(\mathbf{s})|} }{ \frac{1}{2}\sum_{u\in \mathcal{P}}{|S^-_u(\mathbf{s})|}+\sum_{u\in \mathcal{P}}{|S^+_u(\mathbf{s})|} } \label{eqpoa}\\
    &= \frac{ \sum_{i=1}^{k}\sum_{u\in C_i}{|S^-_u(\mathbf{s})|}+\sum_{i=1}^{k}\sum_{u\in C_i}{|S^+_u(\mathbf{s})|} }{ \frac{1}{2}\sum_{i=1}^{k}\sum_{u\in C_i}{|S^-_u(\mathbf{s})|}+\sum_{i=1}^{k}\sum_{u\in C_i}{|S^+_u(\mathbf{s})|} }. 
\end{align}

The PoA is maximized when $\sum_{i=1}^{k}\sum_{u\in C_i}{|S^-_u(\mathbf{s})|}$ is as large as possible and $\sum_{i=1}^{k}\sum_{u\in C_i}{|S^+_u(\mathbf{s})|}$ is as small as possible. Therefore, using \Cref{lem_non_NNG_edges}, we have $$\sum_{i=1}^{k}\sum_{u\in C_i}{|S^-_u(\mathbf{s})|}\leq \sum_{i=1}^{k}\sum_{u\in C_i}{(3|C_i|+2)},$$ while $$\sum_{i=1}^{k}\sum_{u\in C_i}{|S^+_u(\mathbf{s})|} \geq \sum_{i=1}^{k}\sum_{u\in C_i}{(|C_i|-1)},$$ since $C_i$ is connected. Therefore, we have:
\begin{align}
    PoA &\leq  \frac{ \sum_{i=1}^{k}{(3|C_i|+2)}+\sum_{i=1}^{k}{(|C_i|-1)} }{ \frac{1}{2}\sum_{i=1}^{k}{(3|C_i|+2)}+\sum_{i=1}^{k}{(|C_i|-1)} } \nonumber\\
    &=  \frac{ \sum_{i=1}^{k}{(4|C_i|+1)}}{ \frac{1}{2}\sum_{i=1}^{k}{(3|C_i|+2)}+\frac{2}{2}\sum_{i=1}^{k}{(|C_i|-1)} } \nonumber\\
    &=  \frac{ 2\sum_{i=1}^{k}{(4|C_i|+1)}}{ \sum_{i=1}^{k}{(3|C_i|+2)}+\sum_{i=1}^{k}{(2|C_i|-2)} } \nonumber\\
    &=  \frac{ 8|\mathcal{P}|+2k}{5|\mathcal{P}| }.\label{poa18} 
\end{align}
The maximum number of connected components in the NNG is $\frac{|\mathcal{P}|}{2}$ since every connected component contains at least two nodes.  Therefore,
\begin{align*}
    PoA \leq \frac{(1+8)|\mathcal{P}|}{5|\mathcal{P}|} = 1.8. \quad\quad \quad \qedhere
\end{align*}
\end{proof}
Now we proceed to show that our PoA upper bound is almost tight. For this, we give an almost matching lower bound, showing that selfish behavior can increase the number of built edges by $75\%$.
\begin{lemma}
    In 2D Euclidean metrics the PoA is at least $1.75$.
\end{lemma}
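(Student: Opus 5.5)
The plan is to build an explicit family of point sets in the plane that nearly attains the upper bound of \Cref{poa-2d}. The extremal regime suggested by its proof is the one where $G_{NNG}(\P)$ is a perfect matching, so $k=|\P|/2$. Suppose every agent needs exactly $a$ non-NNG edges, i.e.\ $\phi'(u)=a$ for all $u$. There are two ways to supply them.
\begin{itemize}
\item In a social optimum, each non-NNG edge can ideally serve both of its endpoints. This gives cost $|\P|/2 + a|\P|/2$.
\item In a bad NE, every agent buys its own $a$ edges, and none of them helps the other endpoint. This gives cost $|\P|/2 + a|\P|$.
\end{itemize}
The resulting ratio is $(1+2a)/(1+a)$, which equals $7/4$ for $a=3$. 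So the target is a configuration of nearest-neighbor pairs in which every agent needs exactly three additional edges, with two structural features.
\begin{enumerate}
\item \emph{(mutual)} There is a choice of non-NNG edges in which every chosen edge $\{u,w\}$ is simultaneously a greedy first hop needed by $u$ and by $w$.
\item \emph{(non-mutual)} There is another choice in which each agent $u$ owns three edges $\{u,x\}$ that are useless for $x$, meaning that $x$ reaches nothing new greedily through $u$.
\end{enumerate}

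\textbf{Step 1 (configuration).} Place many congruent pairs $\{u,v\}$ at a tiny distance $\varepsilon$ from each other, so the NNG is exactly the matching. Arrange the pairs at the vertices of a regular lattice-like pattern; a ring of pairs, or a finite chunk of a triangular lattice with boundary effects made negligible by taking many copies, are the first candidates. By \Cref{cosine-rule}, the partner covers roughly two of the six cones around each agent, as in the proof of \Cref{lem_non_NNG_edges}. The lattice spacing and orientation of the pairs should be tuned so that the remaining directions split into exactly three regions, each requiring its own neighbor. The point of this tuning is to get $\phi'(u)=3$ exactly: the lower bound $\phi'(u)\ge 3$ should follow from exhibiting three target points that no single neighbor can simultaneously make greedy progress toward.

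\textbf{Step 2 (social optimum upper bound).} Exhibit an explicit network. It consists of the NNG plus, for each agent, three edges pointing to agents in the neighboring pairs in the three needed directions. Choose the pattern so that opposite directions pair up, i.e.\ $w$ lies in $u$'s needed cone and $u$ lies in $w$'s needed cone. Each non-NNG edge is then counted for two agents. Verify greedy routing using \Cref{cosine-rule} cone by cone. This gives $c(\s^*)\le |\P|/2 + 3|\P|/2 = 2|\P|$.

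\textbf{Step 3 (bad NE).} Give each agent $u$ ownership of three edges to points $x$ chosen in a "rotated" pattern: $x$ lies in a needed cone of $u$, but $u$ lies in a cone of $x$ that is already covered by $x$'s partner. Such an edge does not reduce $x$'s requirement. Greedy connectivity of all agents follows as in Step 2. For stability, each agent owns exactly $3=\phi'(u)$ non-NNG edges, and the NNG edges are forced. The edges incident to $u$ but owned by others provide no coverage beyond what the partner already gives, so any deviation still needs three non-NNG edges. By \Cref{independent}, the reachability provided through each neighbor does not depend on that neighbor's strategy, so this reasoning is sound. The NE cost is then $|\P|/2+3|\P| = 3.5|\P|$, and the ratio is $1.75$.

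\textbf{Main obstacle.} The hard part is Step 1 combined with Step 3: finding concrete coordinates where both the mutual and the non-mutual edge patterns exist, and where incoming edges in the bad profile truly cover nothing new. I would first try a highly symmetric finite gadget, with pairs at the vertices of a regular polygon or hexagon, and check it numerically or by angle computations. Boundary agents may need fewer edges; in that case I would use many copies or a large ring so that the ratio tends to $1.75$, which suffices for a lower bound on the PoA.
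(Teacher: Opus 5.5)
\paragraph{Where the proposal stands.} Your counting is exactly the paper's. The NNG is a perfect matching and every agent needs three non-NNG edges. In the cheap network each extra edge is needed by both endpoints, while in the bad equilibrium it is needed only by its owner. So an NNG component costs $4$ versus $7$ edges.

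For a PoA lower bound, however, this arithmetic is the trivial part. The whole content of the lemma is the existence of a point set that realises it, and the paper's sketch rests on the explicit configuration of \Cref{fig:lowerbound}. You never produce one. Steps 1--3 list what a configuration should satisfy: NNG a perfect matching, $\phi'(u)=3$ for all $u$, a $3$-regular family of mutually needed extra edges making everyone greedy connected, and an ownership profile in which each agent buys three extra edges and cannot do better. The existence of one point set with all of these at once is deferred as the ``main obstacle''. As it stands, nothing is proved.

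\paragraph{The Step 3 criterion is false.} An edge $\{u,x\}$ lets $x$ reach every agent in the open half-plane $\{w : d(u,w)<d(x,w)\}$. This half-plane contains far-away agents in every direction within $90^\circ$ of $\overrightarrow{xu}$. Knowing only that $u$ lies in the region covered by $x$'s partner $x'$ does not put this half-plane inside $\{w: d(x',w)<d(x,w)\}$. Moreover, several incoming edges can jointly cover a needed direction of $x$.

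Your own first candidate shows this. Take a unit triangular lattice and place at each lattice point $p$ the pair $\{p,p+(\varepsilon,0)\}$, calling $p$ the western and $p+(\varepsilon,0)$ the eastern agent. Let every western agent buy edges to the western agents at its W, NW and SW lattice neighbours, and mirror this for eastern agents; this is an instance of your rotated pattern. An interior western agent $z$ at the origin then receives edges from the western agents at $(1/2,\pm\sqrt{3}/2)$. These lie in the half-plane $t_x>\varepsilon/2$ covered by $z$'s partner. Yet they are strictly closer than $z$ to every agent with $t_x+\sqrt{3}\,t_y>1$, respectively $t_x-\sqrt{3}\,t_y>1$. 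Together with the partner and one new edge to the eastern agent $(-1+\varepsilon,0)$, which is strictly closer for all $t_x<-(1-\varepsilon)/2$, they give $z$ a greedy first hop toward every agent. By \Cref{independent}, $z$ can therefore replace its three purchased edges by one.

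\paragraph{What stability actually requires.} For each agent $x$, with all its incoming edges present, no set of fewer than three own non-NNG edges may make $x$ greedy connected. A sufficient condition is $\{w\in\mathcal{P}: d(u,w)<d(x,w)\}\subseteq\{w\in\mathcal{P}: d(x',w)<d(x,w)\}$ for every incoming neighbour $u$. That is, no agent may lie in the wedge between the two bisectors, which in a large lattice essentially forces $u$ onto the ray from $x$ through $x'$. Meeting this on the same point set that also supports the fully mutual network of Step 2 is precisely the construction that is missing.
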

\begin{proof}[Proofsketch]
       \begin{figure}[ht]
        \centering
        \includegraphics[width=0.7\linewidth]{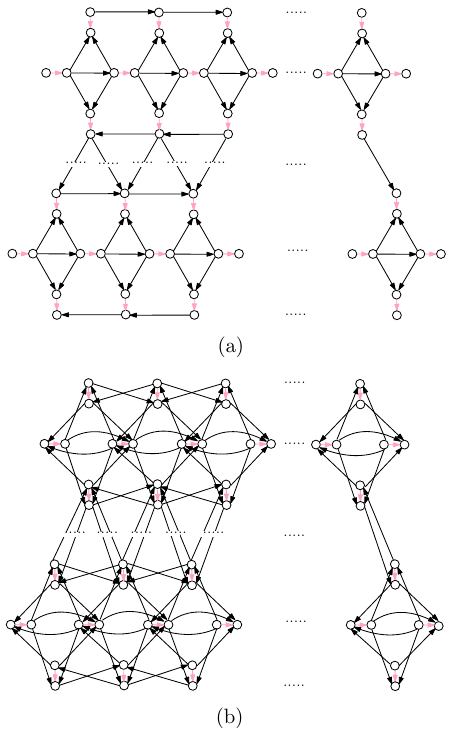}
        \caption{(a) social optimum (SO), (b) a NE with social cost $1.75$ times the cost of SO. Red edges belong to the NNG. All edges are undirected, the arcs depict edge-ownership.}
        \label{fig:lowerbound}
    \end{figure} 
    
Consider the network in \Cref{fig:lowerbound}(a). Let $\mathcal{C}$ denote the set of connected components of the corresponding NNG of the instance. Each NNG component $C \in \mathcal{C}$ consists of two nodes (connected with a red edge) and is connected to nodes outside of~$C$ via 6 edges, the minimum number of required edges for greedy connectivity in this instance. This is a SO, since every non-NNG edge originating from an agent $u$ has equal length and contributes to the greedy routing connectivity of both endpoints. Thus, the total number of edges is $|\mathcal{C}|(\frac{6}{2}+1) = 4|\mathcal{C}|$. In contrast, the network in \Cref{fig:lowerbound}(b), is a NE, where each non-NNG edge contributes to the greedy connectivity of only one endpoint. Hence, the number of the non-NNG edges is doubled compared to the SO, i.e,. it is $|\mathcal{C}|(6+1) = 7|\mathcal{C}|$. Therefore, the social cost ratio is $\frac{7}{4}=1.75$.
\end{proof}

\paragraph*{\textbf{Higher Dimensions  and General Metrics.}} We extend our analysis to obtain upper bounds on the PoA for higher dimensional Euclidean spaces and for general metrics.
\begin{restatable}{corollary}{gammacorfour}
    In Euclidean metrics of dimension $D$, the PoA is at most $2-\frac{1}{K(D)}$, while in general metric spaces it is less than 2.\label{poageneral}
\end{restatable}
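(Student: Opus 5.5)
We reuse the framework of the proof of \Cref{poa-2d} up to inequality \eqref{eqpoa}. That part does not depend on the dimension: it only uses that every greedy-routing network contains the NNG, that in a NE each agent buys at most $\phi'(u)$ non-NNG edges (by \Cref{lem:phi_edges} and \Cref{re:consant_phi}), and that in the SO every non-NNG edge counts towards at most two agents. Writing $A=\sum_u|S^-_u(\s)|$ and $B=\sum_u|S^+_u(\s)|$, we obtain
\[
\mathrm{PoA}\le \frac{A+B}{\frac12 A+B}.
\]
This ratio is increasing in $A$ and decreasing in $B$. So it suffices to bound $A$ from above in terms of $B$, or in terms of the number of nodes and components of the NNG. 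For general metrics, $B\ge 1$ whenever $|\P|\ge 2$, because the NNG is nonempty. Hence the ratio is strictly below $2$, which already gives the claim ``less than 2''.

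For Euclidean space of dimension $D$ we need a quantitative bound. By the Remark after \Cref{sixedges}, the greedy routing degree satisfies $\phi(u)\le K(D)$. Since $u$ is incident to at least one NNG edge, and that edge can be taken as one of the cone edges (the nearest neighbor is the closest point in its cone), we get $\phi'(u)\le K(D)-1$. Therefore $A\le (K(D)-1)|\P|$. For $B$ we use the same component bound as before: $B\ge \sum_i(|C_i|-1)\ge |\P|/2$, since every NNG component has at least two nodes. Substituting,
\[
\mathrm{PoA}\le \frac{(K(D)-1)+\frac12}{\frac12(K(D)-1)+\frac12}=\frac{2K(D)-1}{K(D)}=2-\frac{1}{K(D)}.
\]
The main step to check is exactly this pairing of bounds. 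We need to justify that $\phi'(u)\le K(D)-1$ holds for every agent. The concern is whether an agent with NNG edges might still need $K(D)$ non-NNG edges. The fix is to choose the spherical cap (or cone) covering so that the nearest neighbor sits in one of the regions, and then to apply \Cref{cosine-rule} to that region. We also need $B\ge |\P|/2$, which follows because the NNG has no isolated vertices and each component with $c$ nodes spans a tree with $c-1\ge c/2$ edges.

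In this $D$-dimensional argument we deliberately avoid refining the count to the $2D$ bound $3|C|+2$ from \Cref{lem_non_NNG_edges}, since we only need the cruder per-node bound. Finally, for general metrics we note that the strict inequality $\frac{A+B}{A/2+B}<2$ holds for every instance because $B>0$. This gives a PoA strictly less than $2$ on each instance. Since the statement says ``less than 2'', we do not claim a uniform constant bounded away from 2.
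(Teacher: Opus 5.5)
Your argument is essentially the paper's. Starting from \eqref{eqpoa}, the Euclidean case uses exactly the paper's two bounds, $A\le (K(D)-1)|\P|$ and $B\ge |\P|/2$, and your arithmetic giving $2-\frac{1}{K(D)}$ is correct. For general metrics you take a shortcut. The paper bounds the non-NNG edges by those of the complete graph and obtains the explicit bound $2-\frac{2}{n}$; you only observe that $B>0$ forces $\frac{A+B}{A/2+B}<2$. Both are per-instance statements that tend to $2$, so you are right not to claim a uniform constant below $2$.

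The step whose justification does not work as written is $\phi'(u)\le K(D)-1$. You derive it from a fixed covering of the directions around $u$ by $K(D)$ cones or caps, each handled by its closest point, positioned so that the nearest neighbor lies in one of them. In 2D this is the six $60^\circ$ cones of \Cref{sixedges}. For $D\ge 3$, however, no such covering exists in general, because $K(D)$ is a packing number, not a covering number. Already in $\mathbb{R}^3$ the sphere cannot be covered by $12=K(3)$ regions of angular diameter below $60^\circ$: each such region lies in a cap of radius about $35.3^\circ$, while twelve caps covering the sphere need radius about $37.4^\circ$.

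The bound $\phi(u)\le K(D)$ in the Remark really comes from a greedy selection. Repeatedly connect $u$ to the closest point not yet handled, and mark as handled every remaining point at angle less than $60^\circ$ from it as seen from $u$. These points are reached through the chosen point by \Cref{cosine-rule}, since they are at least as far from $u$. The chosen points are pairwise at angle at least $60^\circ$, so there are at most $K(D)$ of them. The first chosen point is a nearest neighbor of $u$, so its edge is an NNG edge. Hence at most $K(D)-1$ further edges reach everything outside $W_u$, which is exactly $\phi'(u)\le K(D)-1$.

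The paper asserts this step with only the remark that every agent has a nearest neighbor. With the greedy argument replacing the cone covering, your proof is complete.
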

\begin{proof}
    We continue our analysis of inequality $\eqref{eqpoa}$ from the proof of~\Cref{poa-2d}:
     \begin{align}
    PoA &\leq \frac{ \sum_{u\in \mathcal{P}}{|S^-_u(\mathbf{s})|}+\sum_{u\in \mathcal{P}}{|S^+_u(\mathbf{s})|} }{ \frac{1}{2}\sum_{u\in \mathcal{P}}{|S^-_u(\mathbf{s})|}+\sum_{u\in \mathcal{P}}{|S^+_u(\mathbf{s})|} } \nonumber\\
    &= \frac{ 2(\sum_{u\in \mathcal{P}}{|S^-_u(\mathbf{s})|}+\sum_{u\in \mathcal{P}}{|S^+_u(\mathbf{s})|}) }{ \sum_{u\in \mathcal{P}}{|S^-_u(\mathbf{s})|}+2\sum_{u\in \mathcal{P}}{|S^+_u(\mathbf{s})|} }\nonumber \\&= 2 - \frac{2\sum_{u\in \mathcal{P}}{|S^+_u(\mathbf{s})|}}{ \sum_{u\in \mathcal{P}}{|S^-_u(\mathbf{s})|}+2\sum_{u\in \mathcal{P}}{|S^+_u(\mathbf{s})|} } \label{poa-eq2}
    \end{align}

    The PoA is maximized when $\sum_{u\in \mathcal{P}}{|S^-_u(\mathbf{s})|}$ is maximized and $\sum_{u\in \mathcal{P}}{|S^+_u(\mathbf{s})|}$ is minimized. In the general metrics case, the maximum number of non-NNG edges is at most:
    $$\sum_{u\in \mathcal{P}}{|S^-_u(\mathbf{s})|}\leq\frac{2(n-1)}{2}-\frac{n}{2}$$
    where the first term is the number of edges of the complete graph and the second term is the minimum number of edges of the NNG. Therefore, the PoA in general metrics is:
    \begin{align}
    PoA &\leq 2 - \frac{2\sum_{u\in \mathcal{P}}{|S^+_u(\mathbf{s})|}}{ \sum_{u\in \mathcal{P}}{|S^-_u(\mathbf{s})|}+2\sum_{u\in \mathcal{P}}{|S^+_u(\mathbf{s})|} } \nonumber\\ &\leq 2-\frac{2\frac{n}{2}}{\frac{n(n-1)}{2}-\frac{n}{2}+2\frac{n}{2}}\nonumber
=  2-\frac{2n}{n(n-1)+n}\\
&\leq 2-\frac{2n}{n^2}=2-\frac{2}{n}<2.
    \end{align}
    For Euclidean metrics of dimension $D>0$, the maximum number of non-NNG edges that an agent builds is at most $K(D)-1$, because every agent has at least one nearest neighbor. Using \eqref{poa-eq2}, we have:
     \begin{align}
     PoA &\leq 2 - \frac{2\sum_{u\in \mathcal{P}}{|S^+_u(\mathbf{s})|}}{ \sum_{u\in \mathcal{P}}{|S^-_u(\mathbf{s})|}+2\sum_{u\in \mathcal{P}}{|S^+_u(\mathbf{s})|} } \nonumber\leq 2-\frac{2\frac{n}{2}}{(K(D)-1)n+2\frac{n}{2}}\nonumber\\
     &\leq 2-\frac{n}{K(D)n} = 2-\frac{1}{K(D)},\nonumber
    \end{align}
    which completes the proof.
\end{proof}

\subsection{Approximate Nash Equilibria}
\paragraph*{\textbf{General Metrics.}} We first study the simplest way for finding a Nash equilibrium: best response dynamics. Unfortunately, these are not guaranteed to converge.
\begin{restatable}{theorem}{gamatheorbest}\label{thm_BRC}
Best response cycles exist. 
\end{restatable}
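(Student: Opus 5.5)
The claim is existential, so the plan is to exhibit a small explicit instance of the undirected game, under general metrics, together with a closed best response path $\s_0,\s_1,\dots,\s_k=\s_0$. The driving mechanism is the one that \Cref{independent} rules out in the directed case. In the undirected game an edge bought by agent~$v$ can be used by its other endpoint~$w$, so $w$'s greedy connectivity depends on $v$'s strategy. I would place a few agents so that two (or three) agents $a,b$ each need a \emph{shared} non-NNG edge to some hub-like points. The cheapest way to get such an edge is to free-ride on the other's purchase. This gives a classic ``who pays'' cycle.

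\textbf{Instance.} I would use a small finite metric with distances in $\{1,2\}$, plus a few special values, and specify it by a table. The triangle inequality is then automatic for values in $[1,2]$. I would fix the NNG first, as a perfect matching of pairs at distance~$1$, so that NNG edges are forced and owned arbitrarily but never change. Next I would arrange greedy-reachability requirements. Take agents $a,b$ and targets $x,y$, with $d(a,x)$ and $d(b,y)$ large. The edge $\{a,b\}$ gives $a$ a greedy path to $y$ via $b$, but only if $b$ is adjacent to $y$. Symmetrically, $b$ reaches $x$ via $a$ only if $a$ is adjacent to $x$.

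\textbf{Cycle.} I would tune the distances so that the following holds. If $b$ owns $\{b,y\}$, then $a$'s best response is the single edge $\{a,b\}$. If instead $a$ owns $\{a,b\}$, then $b$'s best response drops $\{b,y\}$ and relies on a path through $a$ that requires $a$ to own $\{a,x\}$. This shifting of ownership should return, after four or so moves, to the initial profile.

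\textbf{Verification.} For each step I would check two things, exhaustively over the handful of candidate subsets, which is routine for a constant-size instance:
\begin{enumerate}
\item the moving agent's new strategy makes it greedy connected;
\item no strategy with fewer edges does.
\end{enumerate}
The lower bounds would use the lemma that the NNG edges must be present, together with explicit witnesses. For each target $t$, the agent must have a neighbor strictly closer to $t$. I would then list which neighbors are closer, which reduces minimality to a tiny hitting-set check, in the spirit of \Cref{lem:phi_edges}.

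\textbf{Main obstacle.} The hard part is making each move a genuine \emph{best} response while also guaranteeing that the improving agent strictly prefers to change, and that the configuration truly recurs. Switching agents tend to end up at equilibrium, because undirected edges help both endpoints. My first attempt would use asymmetric triangles of three agents arranged cyclically, in the style of known NCG best response cycles. Each agent in turn buys the edge that the previous agent just dropped, with distances in $(1,2)$ chosen so that the greedy condition $d(\text{next},t)<d(\text{current},t)$ holds along exactly the intended paths. I would fall back on a computer-assisted search over $1$-$2$ metrics on about $6$--$8$ points if hand tuning fails.
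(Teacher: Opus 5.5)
There is a genuine gap. The statement is purely existential, so its proof \emph{is} an explicit instance together with a verified closed best response path in which every move is a strict improvement. Your proposal supplies neither. No point set or distance table is given, no initial profile is fixed, and no move is checked. Your last paragraph defers the construction to hand tuning or a computer search whose outcome is open. As written, this is a search plan rather than a proof.

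The one mechanism you do describe also cannot drive a cycle.
\begin{itemize}
\item It is internally inconsistent. For $a$ to reach $y$ through $b$ you need $d(b,y)<d(a,y)$, so once $b$ drops $\{b,y\}$ it cannot reach $y$ through $a$. A path also cannot `require $a$ to own $\{a,x\}$', because greedy reachability depends only on which edges are present, not on who owns them.
\item More fundamentally, \Cref{independent} is stated for undirected edges as well, so it is not a directed-only obstacle. If $b$ drops $\{b,y\}$ and stays greedy connected, as it does after any best response, then $a$ still reaches via $b$ every $t$ with $d(b,t)<d(a,t)$. So no pressure on $a$ arises.
\end{itemize}
In general, a move that leaves the mover greedy connected can hurt another agent $u$ in only two ways. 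It can delete $u$'s own first hop. Or it can delete an edge whose other endpoint lies on $u$'s greedy path and is thereby left unable to reach the target.

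The paper's cycle combines exactly these two effects. It uses three agents $w,z,a$, a helper node $y$, and three rotating edges:
\begin{enumerate}
\item $w$ buys $\{w,y\}$ to reach $z$ via $y$.
\item This lets $z$ reach $w$ through $y$, so $z$ drops $\{z,a\}$.
\item That deletes $a$'s first hop towards $y$, so $a$ buys $\{a,w\}$.
\item Now $w$ reaches $z$ via $a$ and drops $\{w,y\}$.
\item $y$ can then no longer reach $w$, so $z$'s route through $y$ dies and $z$ rebuys $\{z,a\}$. ($y$ is never activated, which the definition of a best response path allows.)
\item Finally $a$ again reaches $y$ via $z$ and drops $\{a,w\}$, restoring the initial profile.
\end{enumerate}
What your sketch is missing is this rotation: each purchase makes another agent's own edge redundant, and each resulting drop strands a third agent. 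A single edge needed by two agents does not by itself cycle, since once one endpoint buys it the other is satisfied.

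Your preferred search space is also problematic. In a $\{1,2\}$-metric whose distance-$1$ pairs form a perfect matching, every greedy path has at most two hops, and its second hop is an NNG edge. So non-NNG edges serve only their own endpoints, and the second effect above is unavailable there.
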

\begin{proof}
\begin{figure}[h]
    \centering
    \includegraphics[width=0.8\linewidth]{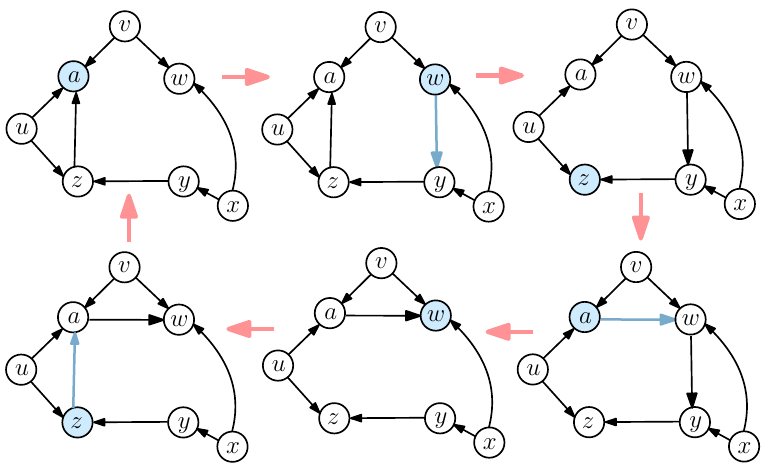}
    \caption{Best Response Cycle.}
    \label{fig:bestresponcecycle1}
\end{figure}
Consider the initial configuration of the network in \Cref{fig:bestresponcecycle1}~(top left). Agent $w$ has a greedy routing path to every agent in the network except agent $z$. Since $y$ is closer to $z$ than $w$ is, agent~$w$ builds an edge to $y$. Next, agent~$z$ is activated, and since it has a greedy routing path to every agent without the edge $\{z,a\}$ it reduces its cost by dropping the edge. The next agent that is activated is agent~$a$, which has a greedy routing path to every agent except $y$. Therefore, agent~$a$ builds an edge to node~$w$, which is closer to $y$ than $a$ is. Then, agent~$w$ is activated and since it has a greedy routing path to every agent without the edge $\{w,y\}$, it reduces its cost by dropping the edge. As a result, agent~$z$ no longer has a greedy routing path to node~$w$. When agent~$z$ is activated next, it builds an edge to node~$a$ that is closer to $w$ than $z$ is. Finally, agent~$a$ is activated and it removes the edge to node~$w$ since it has a greedy routing path to every other agent without this edge, returning the network to the original configuration.
\end{proof}

Next, we introduce some definitions and results that are essential for proving the existence of approximate NE.

\begin{definition}
    Consider graph $G(\P, E)$ that supports greedy routing. For an agent~$u\in \P$ with incident edges $E_u\subseteq E$, the \emph{critical incident set} $H_u\subseteq E_u$ is a maximal set of incident edges such that removing any single edge from $H_u$ breaks agent~$u$'s greedy connectivity.
\end{definition}
\begin{definition}
  Let $G(\P, E)$ be a graph that supports greedy routing and let $H_u$ be a critical incident set of agent~$u$. Consider the modified network $G'=G(\P, E\setminus H_u$), where the edge set $H_u$ is removed. The \emph{critical best response} strategy is defined as agent $u$ 's best response strategy $S_u^{best}$ in $G'$. If multiple best responses exist, we select the one that maximizes the overlap with the critical incident set, i.e., the one for which the quantity~$|S_u^{best}\cap H_u|$ is maximized.
\end{definition}
Next, we define the function $\alpha(u)$, for every agent~$u \in \mathcal{P}$, which reflects how the network's structure influences an agent's strategy.
\begin{definition}
   Let $G(\P, E)$ be a graph that supports greedy routing. Suppose that exactly $k\geq 0$ edges in $S_u^{best}$ are new (not incident to node~$u$ in $G(\P,E)$). Then $\alpha(u) \coloneq k$ is defined as the number of those edges, i.e., $\alpha(u)\coloneq |S_u^{best}\setminus H_u|$. (See \Cref{example_alpha}.) 
\end{definition}
\begin{example}\label{example_alpha}
\begin{figure}[h]
    \centering
\includegraphics[width=\linewidth]{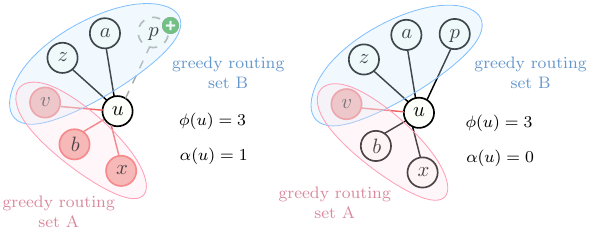}
    \caption{Illustration of the function $\alpha(u)$.}
    \label{fig:alpha_example}
\end{figure}
We want to determine the owner of each edge in \Cref{fig:alpha_example} assuming that the shown network is part of a NE network. 

The sets $H_u, S_u^{best}$ and parameter $\alpha(u)$ are used as tools to achieve this goal. In \Cref{fig:alpha_example}~(left), assume that agent~$u$ is greedy connected under greedy routing set~A, while $H_u=\{\{u, v\}, \{u, b\}, \{u, x\}\}$. Therefore, agent~$u$ has no incentive to buy the edges $\{u,z\},\{u,a\}$, as they do not contribute to its greedy connectivity. Thus, if this was a NE, these edges must therefore be bought by agents~$z$ and $a$, respectively. Moreover, assume that agent~$u$ would be greedy connected under greedy routing set B if there were an edge to node~$p$. Since, the critical best response $S_u^{best}$ is defined as the best response in the network without the $H_u$ edges, and the edges $\{z,u\}, \{a,u\}$ are bought by $z$ and $a$, respectively, we have that $S_u^{best} = \{\{u, v\}, \{u, p\}\}$. As node~$v$ is already incident to $u$, we have $\alpha(u) = 1$.  We can determine which edges agent~$u$ is willing to buy in this instance. First, agent $u$ has an incentive to buy the edge~$\{u,v\}$ because it belongs to both greedy routing sets, i.e., to $S_u^{best}\cap H_u$. Additionally, agent~$u$ can buy at most one edge from $H_u\setminus S_u^{best}$, because $\alpha(u) = 1$.  Otherwise, agent~$u$ could drop the edges $\{u,b\}$ and $\{u,x\}$ and instead buy an edge to node~$p$, reducing its overall cost, since the edges $\{u,z\}$ and $\{u,a\}$ are bought by $z$ and $a$, respectively. Note that while the edges $\{u,a\}$ and $\{u,z\}$ do not contribute to the greedy connectivity of agent~$u$, they still affect the value of $\alpha(u)$.

In \Cref{fig:alpha_example}~(right) the greedy degree of agent~$u$ is $3$ and agent~$u$ is simultaneously greedy connected by two different greedy routing sets, while $H_u = \{\{u, v\}\}$. Thus, $S_u^{best} = \{\{u, v\}\}$, i.e., $\alpha(u) = 0$. Therefore, agent~$u$ has an incentive to buy only the edge $\{u,v\}$. \hfill $\triangleleft$
\end{example}

\begin{lemma}
\label{alphalemma}
Let $G=(\mathcal{P}, E)$ be a graph that supports greedy routing, and let $u \in \mathcal{P}$ be an agent. In any Nash equilibrium of $G$, agent $u$ is  willing to buy only the edges in $S_u^{best}\cap H_u$, along with exactly $\alpha(u)$ many edges from  $H_u\setminus S_u^{best}$. 
\end{lemma}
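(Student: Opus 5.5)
The plan is an exchange argument, modeled on the reasoning in \Cref{example_alpha}. Fix a NE network $G(\s)$ and an agent $u$. Its incident edges split into the critical incident set $H_u$ and the remaining incident edges $E_u\setminus H_u$. By maximality of $H_u$, every edge of $E_u\setminus H_u$ can be removed without breaking $u$'s greedy connectivity. So if $u$ bought such an edge, dropping it would strictly lower its cost. Hence, in NE, $u$ owns only edges of $H_u$, and every edge of $E_u\setminus H_u$ is bought by its other endpoint. I take this as the first step, and it justifies why the critical best response is computed in $G'=G(\P,E\setminus H_u)$: the edges surviving in $G'$ are exactly those that $u$ gets for free.

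Second, I would argue that $u$ is willing to buy every edge of $S_u^{best}\cap H_u$. Each such edge is critical, so $u$ cannot drop it. Moreover, it belongs to a cheapest completion of the free edges, so $u$ never has a reason to replace it. Because of the tie-breaking rule maximizing $|S_u^{best}\cap H_u|$, these edges form the largest overlap consistent with an optimal completion.

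Third, I would bound the number of edges $u$ owns in $H_u\setminus S_u^{best}$. In NE, $u$'s cost equals its best response cost. Using \Cref{independent}, the reachability $u$ obtains through its neighbors does not depend on $u$'s own strategy, so the relevant comparison is within $G'$. There, deviating to $S_u^{best}$ costs $|S_u^{best}| = |S_u^{best}\cap H_u| + \alpha(u)$, since $\alpha(u)=|S_u^{best}\setminus H_u|$. Suppose $u$ owned more than $\alpha(u)$ edges of $H_u\setminus S_u^{best}$, in addition to those in $S_u^{best}\cap H_u$. Then switching to $S_u^{best}$, while keeping the free edges bought by others, would strictly reduce its cost, contradicting NE. This gives the upper bound of $\alpha(u)$. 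For the matching ``exactly'' part, I would show that $u$ owns at least $\alpha(u)$ such edges. If it owned fewer, then $u$'s current purchases together with the free edges would form a greedy routing set in $G'$ of size smaller than $|S_u^{best}|$, which contradicts the optimality of $S_u^{best}$ as a best response in $G'$. Here I read ``willing to buy'' as the number of edges $u$ can be charged for in equilibrium, as in the example.

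The main obstacle is the interaction between ownership and $G'$. Removing $H_u$ also removes edges of $H_u$ that may be owned by the other endpoint; those are free to $u$ yet not present in $G'$. I would handle this by noting that an edge of $H_u$ bought by someone else counts for neither side of the comparison, so the inequality $|\text{owned}| \le |S_u^{best}\cap H_u|+\alpha(u)$ still holds. The ``exactly'' part needs care here, since the lower bound of $\alpha(u)$ may hold only up to such free critical edges. I would first verify the argument assuming that edges in $H_u$ are either owned by $u$ or absent from its accounting, and then check the edge cases against \Cref{example_alpha}.
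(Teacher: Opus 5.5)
Your upper-bound half is correct and is essentially the paper's own exchange argument: if $u$ paid for too many edges of $H_u\setminus S_u^{best}$, it could switch to $S_u^{best}$ and keep what others bought. Your first step is cleaner than the paper's detour through sets $W$ and several greedy routing strategies: an owned edge outside $H_u$ is individually removable, so $u$ would drop it. One refinement: let $u$ deviate to $S_u^{best}$ minus the edges of $S_u^{best}\cap H_u$ already bought by neighbors. This gives $|S_u\cap(H_u\setminus S_u^{best})|\le\alpha(u)$ for the set $S_u$ that $u$ buys, without assuming $u$ owns all of $S_u^{best}\cap H_u$.

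The step that fails is your lower bound for ``exactly''. The critical edges $u$ gets for free are deleted in $G'$ together with the rest of $H_u$. So $S_u$ plus the free edges does not give a completion of $G'$ of size $|S_u|$; the completion it gives is all of $H_u$. Optimality of $S_u^{best}$ then only yields $|S_u^{best}|\le|H_u|$, which does not contradict $|S_u|<|S_u^{best}|$. Your argument closes only when $u$ owns all of $H_u$, since then NE forces $|H_u|=|S_u^{best}|$.

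No repair is possible, because the lower bound is false. Take the $\{1,2\}$-metric on $u,a,b,h,x$ with $d(u,h)=d(x,a)=d(x,b)=d(x,h)=1$ and all other distances $2$. The edges $\{u,a\},\{u,b\},\{u,h\}$ are bought by $a$, $b$, $h$ respectively, and $\{x,a\},\{x,b\},\{x,h\}$ are bought by $x$. Every agent is greedy connected: $u$ reaches $x$ via $a$, $x$ reaches $u$ via $h$, and $a,b,h$ reach each other via $x$.

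This profile is a NE. $u$ pays nothing. In a $\{1,2\}$-metric a point at distance $1$ can only be reached directly, so $x$ needs its three edges and $h$ its edge to $u$. Neither $a$ nor $b$ has a free neighbor closer to $u$, so each needs one edge.

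Here $H_u=\{\{u,a\},\{u,b\},\{u,h\}\}$, and the unique best response of $u$ in $G'$ is $\{\{u,h\},\{u,x\}\}$, so $\alpha(u)=1$. Yet $u$ owns no edge of $H_u\setminus S_u^{best}$, nor even $\{u,h\}\in S_u^{best}\cap H_u$.

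So the statement can only hold as a ceiling: $u$ buys only edges of $H_u$, and at most $\alpha(u)$ of them outside $S_u^{best}$. That is all the paper's proof establishes, since it says ``up to $\alpha(u)$''. Drop the lower-bound step and state your conclusion in that form.
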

\begin{proof}
Given a graph $G(\P, E)$ that supports greedy routing, suppose there exists an ownership assignment such that $G$ is in a NE. Our goal is to determine which incident edges an agent is willing to buy in order for $G$ to remain in a NE.

In a NE every agent~$u$ buys a subset $S_u(\textbf{s})\subseteq E_u$ of its incident edges, while the other edges $E_u\setminus S_u$ must be bought by its neighbors. 

First, we  show that in a NE, an agent $u$  does not buy any edges in $E_u\setminus H_u$. Suppose that there is no set $W\subseteq (E_u\setminus H_u)$ such that while the removal of any single edge of them does not affect the greedy connectivity  of $u$, the removal of all the edges in $W$ does. In this case, agent $u$ has no incentive to buy any edge in $E_u\setminus H_u$ since they don't contribute to its greedy connectivity. In the case that such a set $W$ exists, agent $u$ must be greedy routing connected in $G$ by more than one strategy. Otherwise, the removal of any single edge in $W$ would affect the greedy connectivity of agent $u$, which would imply $W_u\subseteq H_u$ (see \Cref{example_alpha}). Suppose that agent~$u$ buys a set of edges in $E_u\setminus H_u$ that belong only to one strategy that enables greedy routing for agent~$u$. Since the rest of the edges are bought by its neighbors, along with them will be the edges that belong to another strategy that enables greedy routing for agent~$u$. Thus, agent~$u$ would be greedy connected by more than one strategy and could reduce its cost by dropping all of its edges. Alternatively, suppose that agent~$u$ buys a set of edges in $E_u\setminus H_u$ that belong to more than one greedy routing strategy. Since all the incident edges are bought by some agent, agent~$u$ could reduce its cost by keeping only the edges of one strategy and dropping the rest. Thus, in every case agent~$u$ will not buy any edge in $E_u\setminus H_u$ when $G$ is in NE.

On the other hand, agent~$u$ would be willing to buy some of the edges in $H_u$ in order to remain greedy routing connected. In any NE, agent~$u$ is willing to buy all the edges from $S_u^{best}\cap H_u$, since these edges are part of the best response. In addition, agent~$u$ is willing to buy up to $\alpha(u)$ many edges from the remaining edges in $H_u$, i.e., from $H_u\setminus S_u^{best}$. Otherwise, agent~$u$ could reduce its cost by removing the edges in $H_u$ that do not appear in $S_u^{best}\cap H_u$ and instead buying these $k$ edges. This completes the proof.
\end{proof}

In a NE graph $G(\s)$ every edge contributes to the greedy connectivity of at least of one of its endpoints. We distinguish between two edge types, \emph{single-edges} and \emph{double-edges} defined as follows:
\begin{definition}
    In a NE graph $G(\s)$ the strategy of an agent $u\in \P$ can be partitioned into two sets: the set of \emph{single-edges} $S_u^s$ and the set of \emph{double-edges} $S_u^d$. An edge $\{u,v\}\in S_u^s$ is called \emph{single-edge} if it contributes to the greedy connectivity of only agent~$u$, while an edge $\{u,v\}\in S_u^d$ is called \emph{double} if it contributes to the greedy connectivity of both endpoints $u$ and $v$. 
\end{definition}
Note that although a single-edge $\{u,v\}$ contributes to the greedy connectivity of only $u$ it may affect the value of $\alpha(v)$ of agent $v$. This can happen when a single-edge $\{u,v\}$ bought by agent~$u$ is required to be present in a best response of $v$. (See \Cref{example_alpha}.)

Another tool that will be useful is Hakimi's theorem \cite{hakimi1965degrees}:

\begin{theorem}
    (Hakimi's Theorem) Let $G(V, E)$ be an undirected graph and let $\delta$ be an integer-valued function on $V$. Then $G$ admits an orientation such that the in-degree of each vertex $u\in V$ is exactly $\delta(u)$, if and only if for every subset $U\subseteq V$ holds $\sum_{u\in U}{\delta(u)}\geq |E(U)|$, where $E(U)$ denotes the set of edges in $G$ with both endpoints in $U$.\label{Hakimi}
\end{theorem}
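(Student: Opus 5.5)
We prove the statement as a bipartite assignment problem: orienting an edge means choosing which of its two endpoints receives it, that is, which endpoint ``owns'' it. One point needs care first. Summing in-degrees over all vertices gives exactly $|E|$. So an orientation with in-degree \emph{exactly} $\delta(u)$ at every $u$ forces $\sum_{u\in V}\delta(u)=|E|$, and this equality is not implied by the inequalities for $U\subsetneq V$. We therefore prove the standard form of Hakimi's Theorem: an orientation with in-degree \emph{at most} $\delta(u)$ at every $u$ exists if and only if $\sum_{u\in U}\delta(u)\geq |E(U)|$ for all $U\subseteq V$. If in addition $\sum_{u\in V}\delta(u)=|E|$ (the case $U=V$ holding with equality), the upper bounds must all be tight, and we obtain in-degree exactly $\delta(u)$. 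For the game this is the relevant reading: each vertex $u$ may own at most $\delta(u)$ edges.

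Necessity is immediate. Fix any $U\subseteq V$. Every edge of $E(U)$ has both endpoints in $U$, so in any orientation its head lies in $U$. Hence $|E(U)|\le\sum_{u\in U}\mathrm{indeg}(u)\le\sum_{u\in U}\delta(u)$. In particular, the condition for singletons forces $\delta\ge 0$.

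For sufficiency, build a bipartite graph $B$ with left side $E$ and right side $V$. Replace each vertex $u$ by $\delta(u)$ copies, and join each edge $e=\{x,y\}$ to all copies of $x$ and of $y$. An orientation with in-degree at most $\delta$ corresponds to a matching in $B$ that saturates the left side $E$: match $e$ to a copy of the endpoint that becomes its head. We apply Hall's theorem. For $F\subseteq E$, let $U=\bigcup_{e\in F}e$ be the set of endpoints of edges in $F$. The neighbourhood of $F$ in $B$ consists of all copies of vertices of $U$, so it has size $\sum_{u\in U}\delta(u)$. Since $F\subseteq E(U)$, the hypothesis gives $\sum_{u\in U}\delta(u)\ge|E(U)|\ge|F|$, which is Hall's condition. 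An $E$-saturating matching therefore exists, and it yields the desired orientation.

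The main obstacle is the exact-versus-at-most issue above, not the matching argument. We will state the result with the global equality $\sum_{u\in V}\delta(u)=|E|$ made explicit, so that it is correct as an ``exactly'' statement. Equivalently, one can run the same argument as a max-flow computation: source to each edge-node with capacity $1$, edge-node to its two endpoints, and vertex $u$ to sink with capacity $\delta(u)$. The min-cut condition for this network reduces precisely to the subset inequalities.
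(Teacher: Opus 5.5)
The paper does not prove this statement. It quotes it from \cite{hakimi1965degrees} and uses it as a black box, with a max-flow check citing \cite{hakimi1965degrees, gale1957theorem} in the algorithm. So there is no in-paper argument to compare against.

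Your proof is correct and is the standard one:
\begin{itemize}
\item Necessity follows by counting the heads of edges that lie inside $U$.
\item Sufficiency follows from Hall's theorem on the bipartite graph between $E$ and the $\delta$-fold copies of $V$. Taking $U$ to be the endpoint set of $F$ gives $F\subseteq E(U)$, which yields Hall's condition directly.
\item Your max-flow remark is the same argument in the form the paper uses computationally, in the HakimiMaxFlow step of its algorithm.
\end{itemize}

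Your correction of the statement is warranted. Take a single edge $\{x,y\}$ with $\delta(x)=\delta(y)=1$. Every subset inequality holds, yet no orientation gives both endpoints in-degree $1$. So the ``exactly'' version as printed is false without the extra condition $\sum_{u\in V}\delta(u)=|E|$.

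The paper in fact only ever needs the ``at most'' form you prove. It concludes $\mathrm{indeg}_{G'}(u)\le\alpha(u)$ in \Cref{2ne}, and it orients with bound $\alpha'(u)+2$ in \Cref{+2ne}. Your version is therefore exactly what those arguments rely on.
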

 We proceed by proving that every social optimum is a 2-NE. 

\begin{theorem}
 In general metric spaces, every SO is a $2$-NE.\label{2ne}
\end{theorem}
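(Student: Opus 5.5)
The plan is to exploit the freedom of choosing edge ownership in a social optimum. A network $G(\s^*)$ is a SO because its \emph{edge set} has minimum size; the cost of the profile does not depend on which endpoint buys which edge. I will therefore fix the SO edge set $E^*$ and use \Cref{Hakimi} to construct an ownership assignment (an orientation of $E^*$) in which every agent's cost is at most twice its best-response cost. If the statement is meant for an arbitrary ownership, I expect that to fail, so I would read ``every SO'' as ``every SO network with suitably chosen ownership'', which is the reading under which Hakimi's theorem is the natural tool.

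The first step is to show that every edge of $E^*$ is needed by one of its endpoints. Suppose removing $\{u,v\}$ keeps both $u$ and $v$ greedy connected. Then, by \Cref{independent}, every greedy routing path of any third agent $w$ that passes through $u$ or $v$ can be rerouted, so all agents remain greedy connected. This gives a sparser navigable network, contradicting optimality.

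Next, for each agent $u$ I define a demand $m(u)$: the minimum number of edges $u$ must add to $G^*-E_u$ (all incident edges of $u$ deleted) to become greedy connected. This is exactly $u$'s best-response cost when it owns nothing, and it is a lower bound on $u$'s best-response cost under \emph{any} ownership. I then apply \Cref{Hakimi} with $\delta(u)\le 2m(u)$ (using a slack version, i.e.\ in-degree at most $2m(u)$, obtained by adding dummy capacity). Interpreting in-edges as owned edges, every agent owns at most $2m(u)$ edges while its best response costs at least $m(u)$. That is precisely $2$-stability, provided greedy connectivity holds in $G^*$, which it does.

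The main obstacle is verifying Hakimi's condition $|E(U)|\le 2\sum_{u\in U} m(u)$ for every $U\subseteq\P$. My approach is an exchange argument using the SO property. Delete $E(U)$ and, for each $u\in U$, add a set $B_u$ of $m(u)$ edges that greedily connects $u$ in $G^*-E_u$. I need to argue that the resulting network is navigable, and this is the delicate part. Agents outside $U$ lose only edges to $U$, and agents in $U$ get their replacement sets, but the sets $B_u$ were computed in a graph where only $u$'s edges were missing. I would try an induction ordering the agents of $U$ by the distance to each destination $t$. Combined with \Cref{independent}, this should show that the first hop of every replacement path leads to an agent that is already greedy connected towards $t$.

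If the exchange goes through, the new network has at most $|E^*|-|E(U)|+\sum_{u\in U}m(u)$ edges. SO optimality then yields $|E(U)|\le\sum_{u\in U}m(u)$, which is even stronger than needed. If the simultaneous replacement fails, I would fall back on the factor $2$: charge each edge of $E(U)$ to an endpoint for which it is needed (by the first step), and bound per agent the number of needed edges charged to it by $2m(u)$. The factor arises because each $B_u$ edge can substitute for edges needed by $u$ as well as edges needed by its neighbor.
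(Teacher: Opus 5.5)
\textbf{The per-agent accounting fails.} You claim that $m(u)$ is a lower bound on $u$'s best-response cost under any ownership, but the inequality goes the other way. Whatever the ownership, $u$ can drop everything it owns and buy the $m(u)$ edges of $B_u$, and incident edges owned by its neighbors only help. So its best-response cost is \emph{at most} $m(u)$, which is just the greedy routing degree $\phi(u)$. It equals $m(u)$ when $u$ owns all its incident edges, and it can drop sharply as soon as neighbors own some of them. The cost of $u$'s cheapest deviation therefore depends on the very orientation you are constructing. An orientation-independent cap such as $\mathrm{indeg}(u)\le 2m(u)$ cannot certify $2$-stability.

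Nothing in your argument prevents the Hakimi orientation from giving $u$ an edge $\{u,v\}$ that only $v$ needs, while the edges $u$ needs are owned by its neighbors. Then $u$'s best response is the empty strategy, and $u$ is not $\beta$-stable for any $\beta$. Requiring that agents own only edges they need does not fix this either. Suppose $u$ owns three edges that both endpoints need, each the unique cover of some target of $u$, its other targets are covered by edges its neighbors own, and a single new edge covers all three targets. The SO cannot make that swap, since the neighbors need $u$, but $u$ as a deviator can. Its deviation cost is then $1$ while it owns $3$ edges.

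\textbf{What is missing} is a lower bound on deviation cost that is tied to the ownership itself. The paper's proof is organized around this. It first gives $u$ the edges of $S_u^{best}\cap H_u$ and the single-edges (needed by $u$ only), and bounds the number of the latter by $\alpha(u)$ through an SO exchange. It then applies \Cref{Hakimi} only to the remaining double-edges, with caps $\alpha(u)=|S_u^{best}\setminus H_u|$. This is the number of edges in $u$'s best response in $G\setminus H_u$ that are not already incident to $u$. The Hakimi condition is again checked by an SO exchange, and each agent ends up with $|S_u^{best}\cap H_u|+2\alpha(u)$ edges against $|S_u^{best}\cap H_u|+\alpha(u)$.

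Incidentally, the step you flagged as delicate does go through. A network is navigable iff every node has, for every target $t$, a neighbor strictly closer to $t$. So after deleting $E(U)$ and adding the sets $B_u$, nodes outside $U$ keep all their edges and nodes in $U$ are served by $B_u$. Hence $|E(U)|\le\sum_{u\in U}m(u)$ is true. But it only limits how many edges agents own, not how cheaply they can deviate.
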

\begin{proof}
 The proof is structured as follows. We begin with a SO graph $G = (\mathcal{P},E)$, where no ownership is assigned to its edges. Our goal is to prove that there is a valid ownership assignment such that $G$ in an approximate NE.

To achieve this, we first construct the auxiliary graph $\Tilde{G}(\P, \Tilde{E})$. For each agent~$u\in\P$, the graph $\Tilde{G}(\P, \Tilde{E})$ includes all the single-edges of agent~$u$ as well as the edges that belong to its critical best response $S_u^{best}$, i.e $\Tilde{E}= \bigcup_{u\in\P}{(S_u^s\cup S_u^{best})}$. By the definition of a critical best response, and since all the single-edges are present, this graph supports greedy routing.

Since $G$ is a SO, $\Tilde{G}$ has to have at least as many edges as $G$. Therefore we have:
\begin{align}
    |\Tilde{E}| &\geq |E| \nonumber\\
    \sum_{u\in\P}{(|S_u^{best}\cup S_u^s|)} &\geq |E| \nonumber\\
    \sum_{u\in\P}{(|S_u^{best}\setminus H_u|+|S_u^{best}\cap H_u|+|S_u^{s-}|)} &\geq |E|, \nonumber 
\end{align}
where $S_u^{s-}$ denotes the set of single-edges associated with agent $u$ that are not in $(S_u^{best}\cap H_u)$. Since  $\alpha(u) = |S_u^{best}\setminus H_u|$, we have:

\begin{align}
   \sum_{u\in \P}{(\alpha(u)+|S_u^{best}\cap H_u|+|S_u^{s-}|)}&\geq |E|. \label{hakimieq}
\end{align}
Now that we obtained this inequality, we go back to the SO graph~$G$. By \Cref{alphalemma}, in any NE of $\P$ agent~$u$ is willing to buy all the edges in $S_u^{best}\cap H_u$ and exactly $\alpha(u)$ many edges from $H_u\setminus S_u^{best}$. 

First, we assign all the edges in $S_u^{best}\cap H_u$ to agent~$u$. If an edge belongs to the critical best response of both endpoints we choose one arbitrarily. 
Then,  observe that the number of the remaining single-edges contributing to the greedy connectivity of an agent $u\in\P$ cannot be more than $\alpha(u)$, because otherwise we could replace the edges in $S_u^{s-}$ in $G(\P, E)$ with $\alpha(u)$ many edges from $S_u^{best}\setminus H_u$ without affecting the greedy connectivity of the other agents, creating a graph with less edges than the social optimum, a contradiction. Next, we assign the $S^{s-}_u$ edges (which are at most $\alpha(u)$ many) to agent~$u$, for every $u\in\P$. 

Let $G'(\P, E')$ be the subgraph of 
$G$ containing all remaining unassigned edges  $E' = E\setminus((S_u^{best}\cap H_u)\cup S^{s-}_u)$. Therefore we have:
\begin{align}
\sum_{u\in \P}{(\alpha(u)+|S_u^{best}\cap H_u|+|S_u^{s-}|)}&\geq |E|\nonumber\\
\sum_{u\in \P}{\alpha(u)}&\geq |E|-\sum_{u\in \P}{(|S_u^{best}\cap H_u|+|S^{s-}_u|)}\nonumber\\
   \sum_{u\in \P}{\alpha(u)} &\geq |E'|.\nonumber
\end{align}
 Every edge in $(u,v)\in E'$ contributes to the greedy connectivity of both agents $u$ and $v$, so both of them have an incentive to buy it as soon as their remaining $\alpha$-value permits. 

 Suppose towards a contradiction, that there exists a subset $U\subseteq\P$ such that $\sum_{u\in U}{\alpha(u)} < |E'(U)|$, where $E'(U)$ denotes the set of edges in $G'$ with both endpoints in $U$. In this case, we could replace the edges in $E'(U)$ with the corresponding $\alpha(u)$ many edges in $S_u^{best}\setminus H_u$ for each agent $u$, and then add the assigned edges in $E\setminus E'$ having as a result a graph that supports greedy routing and has fewer edges than the social optimum, a contradiction.

Hence, Hakimi's theorem (\Cref{Hakimi}) applies to $G'(\P, E')$. Let $indeg_{G'}(u)$ denote the in-degree of a node $u\in \P$ in graph $G'$. By Hakimi's theorem, the graph $G'$ can be oriented such that for each $u\in\mathcal{P}$, we have $indeg_{G'}(u)\leq \alpha(u)$. Interpreting the orientation as edge ownership, there exists an ownership assignment in $G'$ such that  agent $u$ buys at most $\alpha(u)$ many edges. Using this assignment in the edges $E\cap E'$ in $G$, we end up with an ownership assignment where every agent $u$ builds $|S_u^{best}\cap H_u|+2\alpha(u)$ many edges, while in a NE it would have built $|S_u^{best}\cap H_u|+\alpha(u)$ many. Therefore, the social optimum network $G$ is in 2-NE.    
\end{proof}

\paragraph*{\textbf{2D Euclidean Metrics.}}
We now focus on 2D Euclidean metrics and compared to \Cref{2ne}, we prove the stronger result that every social optimum is a +2-NE. 
\begin{restatable}{theorem}{gammatheorseven}
In 2D Euclidean metrics, every SO is a +2-NE.\label{+2ne} 
\end{restatable}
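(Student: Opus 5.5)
The plan is to reuse the ownership-assignment framework from the proof of \Cref{2ne}, and to replace the multiplicative slack by an additive one using the cone geometry of \Cref{sixedges} and \Cref{lem_non_NNG_edges}. Start from a SO graph $G=(\P,E)$ without ownership. As in \Cref{2ne}, every agent~$u$ is first assigned the edges in $S_u^{best}\cap H_u$ and its remaining single-edges $S_u^{s-}$. The leftover double-edges $E'$ are then oriented via Hakimi's theorem (\Cref{Hakimi}) so that each $u$ receives at most $\alpha(u)$ of them.

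This gives agent~$u$ a cost of at most $|S_u^{best}\cap H_u|+|S_u^{s-}|+\alpha(u)$. Its best response costs at least $|S_u^{best}\cap H_u|+\alpha(u)$: any strategy must restore greedy connectivity after the $H_u$ edges it does not buy are removed. So the gain of any deviation is at most $|S_u^{s-}|$ plus whatever Hakimi's orientation wastes. The multiplicative bound in \Cref{2ne} came only from bounding both quantities by $\alpha(u)$. The task is therefore to show that in 2D the total excess $|S_u(\s)|-c_u^{best}$ is at most $2$ for every $u$.

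First, I would bound $\alpha(u)$ geometrically. By \Cref{sixedges} and \Cref{cosine-rule}, $u$ needs at most one edge per $\pi/3$-cone. Since $u$ always has its nearest-neighbour edge (\Cref{re:consant_phi}), which by the argument of \Cref{lem_non_NNG_edges} covers two consecutive cones, at most four cones need non-NNG edges. A replacement in $S_u^{best}\setminus H_u$ corresponds to a cone in which $u$'s current incident edges fail to cover. Next, I would sharpen the SO exchange argument from \Cref{2ne}: if $u$ owned more than $\alpha(u)+2$ edges beyond $S_u^{best}\cap H_u$, I would replace them by the $\alpha(u)$ edges of $S_u^{best}\setminus H_u$. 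Using \Cref{independent}, this does not affect other agents' greedy connectivity for single-edges. For double-edges, the swap has to be charged to neighbours, and this is exactly where the Hakimi condition is used; one must check it yields a graph with fewer edges than $G$.

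Finally, rather than using Hakimi to guarantee only indegree at most $\alpha(u)$, I would choose the orientation so that each vertex receives at most $\alpha(u)$ double-edges, all lying in cones $u$ actually needs. I would argue that at most two of the four free cones can hold edges that $u$ owns but could drop. This is the step that produces the constant $2$.

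The main obstacle is this last accounting. One must show that the waste from single-edges and from the arbitrary Hakimi orientation together is at most $2$ per agent, independent of $\alpha(u)$. My first attempt would be a cone-by-cone charging argument: each owned edge not in a best response must sit in a cone already served by a best-response edge. Using the $\pi/3$ cone structure and the fact that a double-edge lies in complementary cones at its two endpoints, I would show that such redundant cones number at most two once the nearest-neighbour cones are excluded. If that fails, the fallback is to pick the Hakimi orientation minimizing total excess and argue by an augmenting-path exchange that any agent with excess at least $3$ yields an edge reversal decreasing it.
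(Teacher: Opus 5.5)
\textbf{There is a genuine gap, and it is exactly the step you flag as the main obstacle.}

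Your set-up is the paper's: assign $S_u^{best}\cap H_u$ and $S_u^{s-}$ to $u$, then orient the leftover double-edges $E'$ via \Cref{Hakimi}. But as long as the in-degree target stays at $\alpha(u)$, this is precisely the construction that only yields the $2$-NE of \Cref{2ne}. The excess of $u$ can be as large as $|S_u^{s-}|$, and the SO exchange bounds this only by $\alpha(u)$.

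Your plan to cut the waste to $2$ by local cone geometry is not supported by anything in the proposal:
\begin{itemize}
\item Your cone count bounds $\alpha(u)$ by four. As far as your argument goes, an agent may have $|S_u^{s-}|=\alpha(u)\in\{3,4\}$ and still receive up to $\alpha(u)$ double-edges. The claim that at most two free cones hold droppable owned edges is simply asserted.
\item The requirement that $u$ only receive double-edges ``lying in cones $u$ actually needs'' is not an in-degree constraint. Hakimi's theorem therefore does not supply such an orientation.
\item The exchange ``if $u$ owned more than $\alpha(u)+2$ edges, replace them'' reasons about ownership, which the SO does not fix. For double-edges it deletes edges other agents need, and you leave this unchecked.
\item The augmenting-path fallback only restates the goal. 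An excess-reducing reversal fails to exist exactly when some set $U$ violates Hakimi's condition for the reduced budget, and you give no reason why that cannot happen.
\end{itemize}

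\textbf{The missing idea} is to lower the Hakimi budget for agents with many single-edges and then verify Hakimi's condition globally. The paper first proves $\alpha(u)\ge|S_u^{s-}|$ and sets $\alpha'(u)=\alpha(u)-|S_u^{s-}|\ge 0$. It then orients $E'$ with in-degree at most $\alpha'(u)+2$. So every $u$ pays at most $|S_u^{best}\cap H_u|+|S_u^{s-}|+\alpha'(u)+2=|S_u^{best}\cap H_u|+\alpha(u)+2$, no matter how large $|S_u^{s-}|$ is.

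The condition $\sum_{u\in U}(\alpha'(u)+2)\ge|E'(U)|$ is checked for every $U\subseteq\P$ by a case split:
\begin{itemize}
\item If $\sum_{u\in U}|S_u^{s-}|\le 2|U|$, it follows from $\sum_{u\in U}(\alpha'(u)+|S_u^{s-}|)\ge|E'(U)|$, which is inherited from \Cref{2ne}.
\item Otherwise $\sum_{u\in U}\alpha(u)\ge\sum_{u\in U}|S_u^{s-}|>2|U|$. Here the 2D hypothesis enters through the sparsity of the Delaunay triangulation, $|E(U)|\le 3|U|-6<4|U|$.
\end{itemize}

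So the constant $2$ comes from a global edge-density bound over all subsets $U$, not from per-vertex cone counting. Your cones would be equally available in any dimension via $K(D)$, and a purely local argument at each agent gives no handle on a condition quantified over all subsets.
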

\begin{proof}
In order to prove the additive approximation bound we extent the proof of \Cref{2ne}. In particular, let $G(\P, E)$ be a SO, and for each agent $u\in\P$ let $H_u$ denote the critical incident set, let $S_u^{best}$ be the critical best response, and let $S_u^{s-}$ be 
the set of single-edges associated with agent~$u$ that are not in $S_u^{best}\cap H_u$. Then, according to inequality~\eqref{hakimieq} in the proof of \Cref{2ne} we have:
\begin{align}
   \sum_{u\in \P}{(\alpha(u)+|S_u^{best}\cap H_u|+|S^{s-}_u| )}&\geq |E|. \nonumber
\end{align}
By \Cref{alphalemma}, in any NE of $\P$ agent $u$ is willing to buy all the edges in $S_u^{best}\cap H_u$ and exactly $\alpha(u)$ many edges from $S_u^{best}\setminus H_u$. Therefore, we assign to every agent~$u$ all the edges in $S_u^{best}\cap H_u$. 

Next, we will show that for every agent $u\in\P$ it holds
\begin{align}
    \alpha(u)&\geq |S_u^{s-}|, \nonumber 
\end{align}
because otherwise, since the single-edges affect only the greedy connectivity of the corresponding agent~$u$, we could replace the edges in $S_u^{s-}$ with $\alpha(u)$ many edges having as a result a graph that has less edges than the SO. Therefore, this implies:
\begin{align}
     \sum_{u\in \mathcal{P}}{\alpha(u)}&\geq \sum_{u\in \mathcal{P}}{|S_u^{s-}|}. \label{alphaineq}
\end{align}

Since $\alpha(u)\geq |S_u^{s-}|$, we assign to each agent~$u$ all the edges in $S_u^{s-}$. We then construct a graph $G'(\P, E')$, consisting of all the unassigned edges in $E$, i.e., the set $E' = E\setminus\bigcup_{u\in \P}{((S^{best}_u\cap H_u)\cup S_u^{s-})}$. For each agent $u\in\P$ we define the function $\alpha'(u) = \alpha(u)-|S_u^{s-}|$. Hence we have:
\begin{align}
   \sum_{u\in \P}{(\alpha(u)+|S_u^{best}\cap H_u|+|S^{s-}_u| )}&\geq |E|. \nonumber\\
    \sum_{u\in \P}{\alpha(u) }&\geq |E|- \sum_{u\in \P}{(|S_u^{best}\cap H_u|+|S^{s-}_u|)} \nonumber\\
 \sum_{u\in \P}{\alpha(u)}&\geq |E'|. \nonumber
\end{align}

Observe that this inequality holds for every induced subgraph of~$G'$. Otherwise, using the same argument as before, we could replace all the edges of agent~$u$ with the $\alpha(u)$ many edges in $S_u^{best} \setminus H_u$, resulting in a graph with fewer edges than $E'$. Adding the remaining edges in $E\setminus E'$ would then yield a graph supporting greedy routing with fewer edges than the social optimum, a contradiction.

Replacing $\alpha(u)$ with $\alpha'(u)$ we get:
\begin{align}
      \sum_{u\in \P}{(\alpha'(u)+|S_u^{s-}|)}&\geq |E'|. \nonumber
\end{align}

Consider a subset of agents $U\subseteq \P$, where $E'(U)$ denotes the set of edges in $G'$ with both endpoints in $U$. Thus, we have:
\begin{align}
         \sum_{u\in U}{(\alpha'(u)+|S_u^{s-}|)}&\geq |E'(U)|\nonumber\\
         \sum_{u\in U}{\alpha'(u)}+\sum_{u\in U}{|S_u^{s-}|}&\geq |E'(U)|. \nonumber
\end{align}

In order to prove the desired approximation bound, we first  explore the case where  $\sum_{u\in U}{|S^{s-}_u|}\leq 2|U|$. Replacing $\sum_{u\in U}{|S_u^{s-}|}$ with $2|U|$ in the inequality above we get:
\begin{align}
\sum_{u\in U}{\alpha'(u) }+2|U| &\geq |E'(U)|\nonumber\\
        \sum_{u\in U}{(\alpha'(u) +2)} &\geq |E'(U)|.\nonumber
\end{align}

    Next, we prove that the same inequality holds in the case where $\sum_{u\in U}{|S^{s-}_u|} >2|U|$. We start by modifying  inequality~\eqref{alphaineq}:
\begin{align}
    \sum_{u\in U}{\alpha(u)} \geq  \sum_{u\in U}{|S_u^{s-}|}. \nonumber
\end{align}
Replacing $\sum_{u\in U}{|S_u^{s-}|}$ with $2|U|$ we get:
\begin{align}
      \sum_{u\in U}{\alpha(u)} &> 2|U|.\nonumber
\end{align}
Adding $2|U|$ to both sides then yields
\begin{align}
    \sum_{u\in U}{\alpha(u)}+ 2|U| &> 4|U|. \nonumber
\end{align}

Since the Delaunay triangulation supports greedy routing and since it has at most $3|U|-6$ many edges, it follows that $|E(U)|\leq 3|U|-6 \leq 4|U|$. Otherwise, by \Cref{independent}, we could replace the edges in $E(U)$ with those of the Delaunay triangulation, resulting again in a graph that supports greedy routing and has fewer edges than the SO. Therefore:
\begin{align}
\sum_{u\in U}{\alpha(u)}+ 2|U| &> 4|U|>|E(U)|.\nonumber
\end{align}
Replacing $\alpha(u)= \alpha'(u)+ |S_u^{s-}|$, we get:
\begin{align}
\sum_{u\in U}{(\alpha'(u)+|S^{s-}_u|)}+ 2|U|  &>|E(U)|\nonumber\\
\sum_{u\in U}{\alpha'(u)}+ 2|U|  &>|E(U)|-\sum_{u\in U}{|S^{s-}_u|}\nonumber\\
\sum_{u\in U}{(\alpha'(u)+2)} &>|E'(U)|. \nonumber
\end{align}

Combining these two cases yields:
\begin{align}
   \sum_{u\in U}{(\alpha'(u)+2)}&\ge |E'(U)|.
\end{align}
Applying Hakimi's theorem (\Cref{Hakimi}) as before, we can compute an ownership assignment for the unassigned edges in $G'$. Therefore, in NE every agent $u$ buys $|H_u\cap S_u^{best}|+\alpha(u)$ many edges while in this construction an agent~$u$ buys $|H_u\cap S_u^{best}|+\alpha'(u)+|S_u^{s-}|+2 = |H_u\cap S_u^{best}|+\alpha(u)+2$ many. Hence, $G$ is in +2-NE.
\end{proof}
\subsection{Computational Complexity}
\paragraph*{\textbf{General Metrics}}
We now show that computing a best response is NP-hard, while an approximate NE can be computed in polynomial time in Euclidean metrics.
\begin{restatable}{theorem}{gammattheoreight}
Computing a best response is NP-hard.
\end{restatable}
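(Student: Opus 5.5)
I will reduce from \textsc{set cover}, adapting the one-dimensional construction from the proof of \Cref{bestresponsehardness} to undirected edges. Given an instance $(\{x_1,\dots,x_n\},\{Q_1,\dots,Q_m\})$, I place all points on a line exactly as before. Agent $u$ sits at $0$, each $Q_i$ at $2m+i-1$, each auxiliary $Q'_i$ at $3m+i-1$, and each element $x_j$ at $4m+j-1$. The edges $\{Q_i,x_j\}$ for $x_j\in Q_i$ and $\{Q'_i,Q_i\}$ are now undirected and owned by $Q_i$ and $Q'_i$, respectively. The best response of $u$ is considered with all other strategies fixed.

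The first thing to check is that undirectedness gives no new greedy paths for $u$. Since $u$ is the leftmost point, any greedy path from $u$ to a target $t$ uses only nodes strictly closer to $t$ than $u$. The only edges incident to $u$ are those $u$ buys, so reachability of every $t$ depends only on which targets $u$ connects to and on greedy paths in the fixed undirected graph among the other nodes. I will verify three facts as in the directed proof.
\begin{enumerate}
\item A path through $x_j$ never helps: $x_j$'s neighbours are sets $Q_i$, and routing $x_j\to Q_i\to x_{j'}$ is greedy only if $d(Q_i,x_{j'})<d(x_j,x_{j'})$. That inequality fails because the $x$'s are packed within distance $n$ while the $Q$'s are $2m$ away; I will scale the gaps (e.g.\ make the $Q$–$x$ gap larger than $n$) to guarantee this.
\item $Q'_i$ is still unreachable except by a direct edge. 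Its only neighbour other than $u$ is $Q_i$, and any node from which one could greedily enter $Q'_i$ would have to lie closer to $Q'_i$ than $u$ does. Via undirected edges, $Q_i\to Q'_i$ would need $Q_i$ to be reachable, and from $Q_i$ it is greedy, since $d(Q_i,Q'_i)=m<d(u,Q'_i)$. So I must block the route $u\to Q_i\to Q'_i$, or argue it is harmless.
\item Buying an edge to $x_j$ can be exchanged for an edge to some $Q_k\ni x_j$.
\end{enumerate}

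The main obstacle is step 2. With undirected edges, buying $Q_i$ alone already reaches $Q'_i$, so the $Q'_i$ gadget is unnecessary. I would simplify by dropping the $Q'_i$ nodes and instead force $u$ to reach each $Q_i$ some other way. Give each $Q_i$ a private pendant leaf $Q'_i$ placed to the right of the element block, at position $4m+n+i$ say, attached only to $Q_i$. Then $Q_i\to Q'_i$ is greedy from $u$'s side only through $Q_i$.

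The cleaner fix is to make every $Q_i$ reachable for free by adding a fixed node $r$ adjacent to all $Q_i$, positioned so that $u\to r\to Q_i$ is greedy. Node $r$ must be close to $u$ (e.g.\ at position $m$) but not able to greedily reach any $x_j$ through a $Q_i$. For the latter, I would place $r$ slightly off the line in 2D so that $d(r,x_j)<d(Q_i,x_j)$ for all $i,j$, which kills any greedy route $r\to Q_i\to x_j$. This requires Euclidean placement beyond 1D, which is acceptable since the statement is for general metrics.

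Then every best response contains $\{u,r\}$ plus a set of $Q_i$'s (after the exchange in step 3) covering all $x_j$. Its size is $1+$ the minimum set cover size, and a minimum cover is read off in polynomial time, giving NP-hardness.
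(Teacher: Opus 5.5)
\textbf{Gap: the hub placement fails.} Your reduction breaks at the ``cleaner fix'' with the hub $r$.

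With $r$ at or near position $m$, for all $i,j$ you have $d(r,x_j)\ge 3m+j-1>2m+j-i=d(Q_i,x_j)$. Moving $r$ slightly off the line only makes $d(r,x_j)$ larger. So the inequality $d(r,x_j)<d(Q_i,x_j)$ you rely on cannot be achieved; the reverse holds with margin at least $m$.

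Consequently, $u\to r\to Q_i\to x_j$ is a greedy routing path for every $x_j\in Q_i$. The distances to $x_j$ are $4m+j-1$, then about $3m+j-1$, then $2m+j-i$, then $0$. So once $u$ buys $\{u,r\}$ it reaches every $Q_i$ and every $x_j$. The best response is the single edge $\{u,r\}$ for every set cover instance, and the reduction encodes nothing.

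This is intrinsic to putting the hub on $u$'s side. Any $r$ with $d(Q_k,r)\ge d(u,r)$ for all $k$ (which is what makes $\{u,r\}$ mandatory) has first coordinate at most $m$. Such an $r$ is therefore farther from every element than every set node is.

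Your first variant fails too. A pendant leaf beyond the elements, attached only to $Q_i$, can only be entered from $Q_i$ or by a direct edge. So $u$ must buy an edge to $Q_i$ or its leaf for each $i$, and buying all $m$ sets is always a best response.

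\textbf{The missing idea and the paper's fix.} The free access to the set nodes must come from the \emph{far} side of the set block, and the hub, not $u$, should pay for its edge to $u$.

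This is exactly the paper's proof. It keeps $Q'_i$ at $3m+i-1$, between sets and elements. It makes the former arcs undirected, owned by their tails, and adds $\{Q'_i,u\}$ owned by $Q'_i$.
\begin{itemize}
\item The path $u\to Q'_i\to Q_i$ is greedy towards $Q_i$, since $d(Q'_i,Q_i)=m<2m+i-1=d(u,Q_i)$.
\item The path $Q'_i\to Q_i\to x_j$ is not greedy towards $x_j$, since $d(Q_i,x_j)=2m+j-i>m+j-i=d(Q'_i,x_j)$.
\end{itemize}
So the elements can still only be reached by buying a set cover, and your exchange step~3 goes through. A single hub near position $3m$, adjacent to all $Q_i$ and owning an edge to $u$, would work equally well, and everything stays on the line.

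\textbf{Keep your step~1.} Spreading the blocks so that detours $x_{j'}\to Q_{k'}\to x_j$ are never greedy is a genuine point of care. The paper's one-line adaptation does not address such detours, which undirected set--element edges can create when $n$ is large relative to $m$.
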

\begin{proof}
    We modify the proof of \Cref{bestresponsehardness} as follows: First we replace every directed edge with an undirected one and we assign the ownership to the tail node of the original edge. Next, we add an edge from each node $Q'_i$ to $u$, assigning the ownership of the edge to $Q'_i$. In this way agent~$u$ has a greedy routing path to every node~$Q'_i$ and every node~$Q_i$. The rest of the proof remains the same.
\end{proof}
\begin{algorithm}
\caption{Compute\_Approximate\_NE }\label{algo}
\begin{algorithmic}[1]  % [1] means line numbers
\State $E \gets \emptyset$
\ForAll{$u\in\P$}
    \State $\Phi(u)\gets$ minimum greedy routing set of agent~$u$
    \State $E\gets E\cup \Phi(u)$ 
\EndFor
\State $X \gets$ edges in $E$ whose removal do not disable greedy routing for any agent
\State $E\gets E\setminus X$
\State  $G \gets G(\P, E)$
\State $flag\gets true$
\While{$flag$}
    \State $A\gets \emptyset, \ A_u\gets \emptyset, \ S \gets \emptyset, \ S_u^{s-} \gets \emptyset$
    \State $H_u\gets$ critical set of agent~$u$ in $G$
    \State $S_u^{best}\gets$ critical best response of agent~$u$ in $G$   
    \ForAll{$u\in\P$}
        \State $A_u\gets S_u^{best}\setminus H_u$
        \State $S_u \gets$ single-edges of agent~$u$ that are not in $S_u^{best}\cap H_u$ 
        \State $A\gets A\cup A_u$
        \State $S\gets S\cup S_u^{s-}$
    \EndFor
    \State $a_{change} \gets false$
    \ForAll{$u\in\P$}
        \If{$|A_u| < |S_u^{s-}|$}
            \State $E \gets (E\setminus S^{s-}_u)\cup A_u$
            \State $a_{change} \gets true$
            \State \textbf{break}
        \EndIf
    \EndFor
    \If{$a_{change}$} 
        \State \textbf{continue}
    \EndIf
   \If{HakimiMaxFlow(A, S, E)}
       \State $w\gets$ ownership assignment
       \State $flag\gets false$
    \Else
        \State $E \gets A\cup S$
    \EndIf
    \State $G\gets G(\P, E)$
\EndWhile 
\State \Return $G, w$
\end{algorithmic}

\end{algorithm}
\paragraph*{\textbf{Euclidean Metrics}}
Here, we prove that an approximate NE with at most 1.8 times the optimal number of edges in 2D Euclidean metrics, and $2-1/K(D)$ times the optimal number of edges in $D$-dimensional Euclidean metrics can be computed in polynomial time. Note that this bound improves on the approximation ratio of 3 achieved by the Delaunay triangulation in 2D Euclidean metrics.

\begin{restatable}{theorem}{gammatheornine}
    A $2$-NE in Euclidean metrics of dimension $D>0$, and a $+2$-NE in 2D Euclidean metric that has at most 1.8 times the optimal number of edges in 2D Euclidean metrics, and $2-1/K(D)$ times the optimal number of edges in $D$-dimensional Euclidean metrics, can be computed in polynomial time.
\end{restatable}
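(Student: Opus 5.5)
The plan is to analyze Algorithm~\ref{algo} and show three things: it runs in polynomial time, its output is greedy-routing enabled with the stated density bound, and the ownership assignment it returns gives a $2$-NE (respectively a $+2$-NE in 2D). For the first phase, the remark after \Cref{sixedges} says each minimum greedy routing set $\Phi(u)$ can be computed in polynomial time, since it has at most $K(D)$ edges and we can try all subsets of that size. The union $\bigcup_u \Phi(u)$ enables greedy routing: every $u$ has, for each target $w$, a neighbour strictly closer to $w$, so greedy paths exist by induction on the distance to $w$.

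Removing redundant edges $X$ one at a time, re-checking greedy connectivity after each removal, keeps greedy routing intact. The resulting graph $G$ contains the NNG and gives every agent at most $K(D)$ incident edges that it needs. Then the counting from the proof of \Cref{poa-2d} and \Cref{poageneral} applies, with $\phi(u)$ bounds replacing the NE property.
\begin{itemize}
\item The NNG edges are forced.
\item Each agent is charged at most $\phi'(u)$ non-NNG edges.
\item Any optimum has at least $\frac12\sum_u \phi'(u)$ non-NNG edges.
\end{itemize}
With \Cref{lem_non_NNG_edges} in 2D, this yields ratio $1.8$ in 2D and $2-1/K(D)$ in dimension $D$. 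I must check that the while loop never increases the edge count beyond this bound. Each modification replaces $S_u^{s-}$ by the strictly smaller set $A_u$, or replaces $E$ by $A\cup S$, which I will argue is not larger than $E$ when the Hakimi condition fails. So the bound from the initial graph persists.

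For the loop itself, each iteration computes $H_u$ and $S_u^{best}$; these take polynomial time because best responses in Euclidean space involve only constantly many edges per agent (again at most $K(D)$ candidates relevant to greedy connectivity). The check $|A_u|<|S_u^{s-}|$ mirrors the exchange argument used in \Cref{2ne}. Whenever it fires, the edge count strictly drops while greedy routing is preserved, by \Cref{independent}, because single-edges matter only to $u$. When no such violation exists, we test the condition of Hakimi's theorem (\Cref{Hakimi}) through a max-flow computation on the bipartite edge/vertex network, with capacity $\alpha(u)$ (resp.\ $\alpha(u)+2$ in 2D). If the flow saturates all edges, the orientation is an ownership assignment. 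The arguments of \Cref{2ne} and \Cref{+2ne} then show that every agent buys $|S_u^{best}\cap H_u|+\alpha(u)$ plus at most $\alpha(u)$ (resp.\ plus at most $2$) extra edges, which is a $2$-NE (resp.\ $+2$-NE). For the $+2$ case the proof of \Cref{+2ne} also uses $|E(U)|\le 3|U|-6$; the algorithm can enforce this by replacing the edges inside a violating $U$ with the Delaunay triangulation of $U$.

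The main obstacle is termination in polynomially many rounds. The potential I would use is $|E|$, which is a nonnegative integer initially at most $K(D)\,n$. Every non-terminating iteration must strictly decrease it. For the single-edge exchange this is immediate. For the Hakimi-failure branch, the minimum cut returned by the max-flow identifies a set $U$ with $\sum_{u\in U}\alpha(u)<|E'(U)|$. The replacement step should swap only $E'(U)$ for the $\alpha$-edges $S_u^{best}\setminus H_u$ of agents in $U$, mirroring the contradiction argument in \Cref{2ne}; I would make sure the algorithm is read this way, since that is what gives the strict decrease. Greedy connectivity is preserved by the definition of critical best responses together with \Cref{independent}. Hence there are at most $K(D)\,n$ rounds, each polynomial. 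On termination the Hakimi condition holds, so the approximate-equilibrium guarantee follows, and the density bound holds because edges were only ever removed net.
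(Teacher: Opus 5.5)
Your route is the paper's: analyze Algorithm~\ref{algo} with $|E|$ as a strictly decreasing potential, test Hakimi's condition by max-flow, and import \Cref{2ne}, \Cref{+2ne} and the counting of \Cref{poa-2d} and \Cref{poageneral}. Your one-at-a-time filtering, the swap of only $E'(U)$ for a min-cut set $U$, and the per-subset Delaunay repair actually patch weak spots of the paper's version. Replacing all of $E'$ by $A\cup S$ need not shrink $E$ when only some $U$ violates Hakimi's condition, and one global comparison with $DT(\P)$ bounds $|E|$ but not the numbers $|E(U)|$ used in \Cref{+2ne}. For your repair, use the threshold $|E(U)|>4|U|$ (detectable by a min-cut), since $3|U|-6<1$ for $|U|=2$.

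There is, however, a concrete error in the $+2$ case. Because $S_u^{s-}$ is assigned to $u$ before the flow step, the capacity on $E'$ must be $\alpha(u)-|S_u^{s-}|+2$ (the paper's $\alpha'(u)+2$), not $\alpha(u)+2$. With your capacity, $u$ may own $|S_u^{best}\cap H_u|+\alpha(u)+|S_u^{s-}|+2$ edges, which is not a $+2$ bound. With the corrected capacity, a violating cut only yields $\sum_{u\in U}(\alpha(u)-|S_u^{s-}|+2)<|E'(U)|$, which does not give $\sum_{u\in U}\alpha(u)<|E'(U)|$, so your swap is then not guaranteed to shrink $E$. You can either derive the $+2$ orientation from the argument of \Cref{+2ne} once the loop has enforced the $2$-NE conditions and the density bound (this is how the paper argues), or supply a different shrinking step for that branch.

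Second, your reason why $S_u^{best}$ is polynomial---that best responses in Euclidean space use only constantly many edges---is false as stated. The paper's undirected NP-hardness reduction puts the points on a line and forces $u$ to buy a whole set cover. The missing observation is that $G$ supports greedy routing, and if $d(y,w)<d(u,w)$, every greedy path from $y$ to $w$ stays strictly closer to $w$ than $u$. Hence it avoids $u$ and survives deleting $H_u$. So in $G\setminus H_u$ agent $u$ is greedy connected iff every $w\neq u$ has a neighbor of $u$ strictly closer to it. The angular construction achieves this with at most $K(D)$ edges, so brute force over subsets of that size (with the tie-break) finds $S_u^{best}$ in $n^{O(K(D))}$ time, polynomial for fixed $D$. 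The same local criterion shows that the single-edge exchange and your swap preserve greedy routing. For the swap, though, every unassigned edge must lie in $H_u\cap H_v$; otherwise $S_u^{best}$ may use a deleted edge that is not critical for $u$. Since the loop can create redundant edges, re-run the filtering at the start of every round. Finally, the bullet ``each agent is charged at most $\phi'(u)$ non-NNG edges'' is not automatic for a graph built from the sets $\Phi(u)$, since $\Phi(u)$ may contain more than $\phi'(u)$ non-NNG edges. For $u=0$, $v=1$, $y=1.5$ on a line, $\Phi(u)=\{\{u,y\}\}$ is minimum while $\phi'(u)=0$. Starting instead from a minimum completion of the NNG edges at each $u$, of size $\phi'(u)$, makes the counting of \Cref{poa-2d} and \Cref{poageneral} go through.
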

\begin{proof}
   Consider \Cref{algo}. First we compute the minimum greedy routing set $\Phi(u)$ for every agent $u\in\P$. Then, we filter the resulting edges by excluding the ones whose removal does not disable greedy routing for any agent. Next, we construct a graph $G(\P, E)$ with the remaining edges. Since the minimum greedy routing set of each agent has constant size and the filtering step involves checking the incident edges per agent, this procedure runs in polynomial time.

   Next, for a number of iterations which we will show later that is less than $(K(D)-1)|\P|-1$, we compute for every agent $u$ the critical incident set $H_u$, the critical best response $S_u^{best}$, the edge set $A_u=S_u^{best}\setminus H_u$ and the set of single edges that are not in $S_u^{best}\cap H_u$ or $S_u^{s-}$. Each of these computations can be performed in polynomial time. Note that $|A_u| = \alpha(u)$. 
   
   For each agent $u\in \P$ we check whether $\alpha(u)< |S_u^{s-}|$. If this is the case, the edges in $S_u^{s-}$ are replaced with the edges in $A_u$ within $E$. The resulting graph contains fewer edges
   and still supports greedy routing, since the single-edges associated with an agent $u$ affect only its own greedy connectivity. The algorithm then recomputes  $H_u, A_u, A$, and $S_u^{best}$ for each agent $u\in\P$. Because every graph  supporting greedy routing must have at least $|\P|-1$ edges, and each iteration of the loop strictly decreases the number of the edges, this procedure can repeat at most $(K(D)-1)|\P|-1$ times. 
   
   If on the other hand, $\alpha(u)\geq |S_u^{s-}|$ for every $u\in\P$, then we check if Hakimi's theorem (\Cref{Hakimi}) applies, i.e., if $\sum_{u\in\P}{(\alpha(u)+S_u^{s-})}\geq |E'|$, where $E' =  \bigcup_{u\in\P}{(H_u\setminus (S_u^{best}\cup S_u^{s-}))}$. This condition can be checked in polynomial time using a max-flow algorithm \cite{hakimi1965degrees, gale1957theorem}. If the maximum flow is smaller than $|E'|$, the conditions for Hakimi's theorem are not satisfied. In that case, we replace the edges in $E'$ with the edges in $A\cup S$, decreasing the number of the edges and restart the loop. Again, this can happen at most $(K(D)-1)|\P|-1$ times. 
   
   If the conditions for Hakimi's theorem hold, we obtain an ownership assignment for the edges $E'$ using the max-flow solution. Then, we assign the remaining edges in $H_u, S_u$ to each agent $u$. Since, $\alpha(u)\geq S_u^{s-}$ for every agent $u$, and since Hakimi's theorem applies, it follows (as shown in \Cref{2ne}) that the resulting graph is a 2-NE for Euclidean metrics. Furthermore, the total number of edges is at most $2-1/K(D)$ times the optimal. This holds because in the initial construction of $G(\P, E)$ we followed the same procedure used for computing the social cost in \Cref{poa-2d} and \Cref{poageneral}.

   For 2D Euclidean metric, in line 8 we compute the Delaunay triangulation of the same point set. If the Delaunay triangulation has fewer edges than $E$, we replace $E$ with the edges of the Delaunay triangulation. Therefore, the resulting graph has at most  $3|\P|-6$ edges. Using,  \Cref{+2ne} and \Cref{poa-2d} we get that this construction is a +2-NE and the number of edges is at most 1.8 times the optimal. 
\end{proof}
\begin{corollary}
In Euclidean metric of dimension $D>0$, deciding if a strategy profile is a NE is polynomial time computable.
\end{corollary}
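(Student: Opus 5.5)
To decide whether a given strategy profile $\s$ (with edge ownership) is a NE in $D$-dimensional Euclidean space, I will check, for each agent $u\in\P$ separately, whether $u$ is greedy connected in $G(\s)$ and whether $u$ could reach a strictly smaller cost by some deviation. Since there are $n$ agents, polynomial time per agent suffices.

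\textbf{Greedy connectivity.} This check is straightforward. For each destination $t$, the greedy-reachability relation toward $t$ is a DAG: it only uses edges $\{x,y\}$ with $d(y,t)<d(x,t)$. So I sort the nodes by distance to $t$ and run a reachability search from $u$ toward $t$ in this DAG. Doing this for all $t$ takes $O(n\cdot|E|)$ time. If some agent is not greedy connected, then $\s$ is not a NE, since $u$ could buy edges to all other nodes and escape the infinite penalty.

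\textbf{Bounding the size of a better deviation.} By \Cref{lem:phi_edges} and the Remark following \Cref{sixedges}, any agent in a NE buys at most $\phi(u)\le K(D)$ edges, and $K(D)$ is a constant for fixed $D$. I would first test $|S_u|\le\phi(u)$, where $\phi(u)$ is computable in polynomial time by enumerating subsets of size at most $K(D)$. If this test fails, $\s$ is not a NE. Otherwise, $|S_u|\le K(D)$, so any improving deviation $S'_u$ has fewer than $K(D)$ edges. There are $O(n^{K(D)})$ such candidate sets. For each candidate, I form $(S'_u,\s_{-u})$ and recheck greedy connectivity of $u$ only, since the definition of NE concerns only $u$'s own cost. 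If some candidate with $|S'_u|<|S_u|$ leaves $u$ greedy connected, then $\s$ is not a NE.

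\textbf{Main obstacle.} The main difficulty is justifying that this enumeration is polynomial. This relies on $D$ being a fixed constant, so that $K(D)\le 2^{0.401D}$ is a constant exponent. The NP-hardness of computing a best response in general does not contradict this: in the reduction, the best response has unbounded size, whereas here we only need to beat a strategy whose size is at most the kissing number. The deviation is also evaluated in the undirected network in which $u$'s other incident edges, owned by its neighbors, remain present.

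If all agents pass these checks, then $\s$ is a NE. Every step is polynomial in $n$ for fixed $D$.
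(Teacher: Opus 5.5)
Your argument is correct, and it takes a different route from the paper's. The paper verifies a profile through the structural characterization of \Cref{alphalemma}: for each agent it computes the critical incident set $H_u$, the critical best response $S_u^{best}$ and $\alpha(u)$. It then checks that $u$ owns every edge of $S_u^{best}\cap H_u$ and compares the number of its other owned edges against $\alpha(u)$. You skip that machinery and test the definition of NE directly. After checking greedy connectivity, you use only \Cref{lem:phi_edges} and the kissing-number bound $\phi(u)\le K(D)$ to cap $|S_u|$. You then enumerate the $O(n^{K(D)-1})$ strictly smaller deviations of $u$, each settled by a polynomial greedy-reachability test. Your route is more elementary and self-contained. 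It does not rest on the rather informally argued \Cref{alphalemma}, and it does not need to compute $S_u^{best}$. That set is itself a best response in the modified network $G\setminus H_u$, so its tractability needs the same kind of bounded-size argument that you make explicit. Your route also applies unchanged to the directed variant. In exchange, the paper's route identifies which incident edges an agent must own in any NE. This links the verification to the ownership-assignment framework behind \Cref{2ne}, \Cref{+2ne} and \Cref{algo}. Both arguments are polynomial only for fixed $D$, since the exponent depends on $K(D)$; you state this assumption explicitly, while the paper leaves it implicit.
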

\begin{proof}
Since, as shown above, the set of the edges that are bought by each agent $u$ in any NE can be computed in polynomial time, we check that the corresponding edges are assigned to each agent $u$. For the other edges, we check if the remaining $\alpha$-value is at most the number of edges assigned to each agent~$u$.
\end{proof}

\section{Conclusion}
We study decentralized network creation to enable greedy routing. We uncover that for the case with directed edges equilibria exist and are optimal, while the setting with undirected edges is much more challenging. For the latter, we give an algorithm that constructs approximate equilibria with only slightly more edges compared to the social optimum. Moreover, we give an almost tight bound on the price of anarchy, which shows that this increase in the number of edges due to the selfishness of the agents cannot be avoided. However, we emphasize that our constructed networks still significantly outperform the well-known Delaunay triangulation.  

Future work can build on our basis that combines network creation games with computational geometry. In particular, enhancing the construction so that additional features like stretch or robustness guarantees can be achieved. Moreover, proving that exact equilibria exist is an exciting open problem. We believe that our construction might already yield Nash equilibria, but proving this is challenging and requires additional structural insights.  

\bibliographystyle{ACM-Reference-Format} 
\bibliography{references}

\clearpage

\end{document}